\RequirePackage{import}
\documentclass{article}
\usepackage{iclr2027_conference,times}

\usepackage{amsmath,amsfonts,bm}

\def\eqref#1{equation~\ref{#1}}

\def\1{\bm{1}}

\DeclareMathAlphabet{\mathsfit}{\encodingdefault}{\sfdefault}{m}{sl}
\SetMathAlphabet{\mathsfit}{bold}{\encodingdefault}{\sfdefault}{bx}{n}

\DeclareMathOperator*{\argmax}{arg\,max}
\DeclareMathOperator*{\argmin}{arg\,min}

\usepackage{hyperref}
\usepackage{url}
\usepackage{xspace}
\usepackage{graphicx}
\usepackage{wrapfig}
\usepackage{flafter}
\usepackage{placeins}
\usepackage{algorithm}
\usepackage{algpseudocode}
\usepackage{enumitem}
\usepackage{booktabs}
\usepackage{multirow}
\usepackage{siunitx}
\usepackage{pifont}
\usepackage{tabularx}
\usepackage{array}
\usepackage{amssymb}
\usepackage{amsthm}
\newtheorem{proposition}{Proposition}[section]
\theoremstyle{remark}
\newtheorem*{remark}{Remark}
\usepackage[table]{xcolor}
\newcommand{\shortname}{AlphaPADI}

\newcommand{\highlight}[1]{
  \par\smallskip
  \noindent\textbf{#1}\ 
}

\title{AlphaPADI: Formulaic Alpha Discovery Via Pool-Aware Hierarchical Discrete Diffusion}

\iclrfinalcopy
\begin{document}

\author{Yanzheng Jin\textsuperscript{1},
Pengyang Shao\textsuperscript{1},
Yunshan Ma\textsuperscript{2},
Haowen Pan\textsuperscript{3}\\
\textbf{Naixin Zhai\textsuperscript{3},
Chen-Hui Song\textsuperscript{4},
Fei Shen\textsuperscript{1},
Kenji Kawaguchi\textsuperscript{1}}\\
\textsuperscript{1}National University of Singapore,
\textsuperscript{2}Singapore Management University\\
\textsuperscript{3}University of Science and Technology of China,
\textsuperscript{4}Tsinghua University
}

\maketitle
\lhead{}

\begin{abstract}
Formulaic alpha discovery seeks symbolic expressions that predict
cross-sectional asset returns. In deployment, multiple formulas are
combined into an alpha pool, where each formula is valued through the
complementary information it contributes to joint predictive
performance. 
While Reinforcement Learning and Generative Flow Networks have emerged as promising approaches for generating formulaic alphas, existing frameworks face three related
challenges. First, generating formulas individually leaves pool context
and inter-formula complementarity outside the generative state. Second,
formula-wise generation lacks a unified mechanism for preserving and
revising structures at different levels.
Third, pool-level rewards jointly reflect predictive performance and
redundancy but cannot be differentiated directly through symbolic
evaluation to train the generator. To overcome these challenges, we introduce
\textbf{\shortname{}} (Formulaic \textbf{\underline{Alpha}} Discovery
via \textbf{\underline{P}}ool-\textbf{\underline{A}}ware Hierarchical
Discrete \textbf{\underline{DI}}ffusion), a novel framework built around
three components: (1) grammar-constrained buffer initialization that
constructs syntactically valid pool candidates, (2) pool-aware
hierarchical diffusion that reconstructs complete pools at multiple
structural scales under the current pool context, and (3) reward-guided
pool refinement that evaluates joint predictive performance and inner
diversity, updates the elite buffer, and trains the reverse model through
reconstruction and preference learning. Empirical results on the Chinese and U.S. stock markets demonstrate that \shortname{}
outperforms the evaluated baselines in both predictive and portfolio
performance, thereby validating pool-aware generation as an effective
framework for automated alpha discovery.
\end{abstract}

\begingroup
\setlength{\abovecaptionskip}{4pt}
\setlength{\textfloatsep}{8pt plus 2pt minus 2pt}
\setlength{\floatsep}{8pt plus 2pt minus 2pt}
\setlength{\intextsep}{8pt plus 2pt minus 2pt}
\section{Introduction}
\label{sec:intro}

Formulaic alpha discovery aims to mine interpretable symbolic signals
that predict cross-sectional asset returns and support systematic
portfolio construction. These predictive signals are commonly referred
to as \emph{alphas}~\citep{kakushadze2016101}. A formulaic alpha represents such a signal as a symbolic expression over market variables, such as
$\operatorname{TsMean}(\mathrm{Close},5)-\mathrm{Open}$.
This explicit structure makes the signal interpretable and its computation auditable~\citep{yang2020qlib}. Discovering effective formulas
requires searching the combinatorial space of features, operators, and
temporal windows~\citep{zhang2020autoalpha}. In practice, multiple alphas are deployed as a pool
whose joint prediction drives asset ranking and portfolio construction~\citep{yu2023generating}.
The practical goal is therefore not merely to find strong individual
formulas, but to identify formulas that contribute complementary
predictive information when combined.

Traditionally, formulaic alpha discovery has followed a manual,
hypothesis-driven process. Researchers formulate financial or economic
hypotheses, translate them into symbolic expressions, and assess their
predictive power through statistical tests and backtesting~\citep{fama1993common}. This process
produces interpretable signals grounded in domain knowledge, but the
number of hypotheses that can be examined is limited by human effort and
prior experience. As financial data become increasingly high-dimensional,
the space of nonlinear interactions among variables, operators, and
temporal windows grows beyond the reach of manual enumeration~\citep{gu2020empirical}. Manual
discovery also makes it difficult to search systematically for
complementary formulas instead of repeatedly expressing similar market
effects. These limitations have motivated automated alpha mining, in
which learning algorithms search large formula spaces at scale.

Automated alpha mining explores formula space through evolutionary
search, reinforcement learning, and generative modeling~\citep{shi2025alphaforge}. Genetic
algorithms evolve formulas through mutation, crossover, and fitness-based
selection~\citep{koza1994genetic,zhang2020autoalpha,chen2021empirical}. Reinforcement
learning formulates expression construction as a sequential decision
process and can score a formula by its contribution to a downstream
combination model~\citep{yu2023generating,zhu2025alphaqcm,zhao2025quantfactor}. Generative flow networks instead learn to sample complete formulas with probabilities
proportional to their rewards, encouraging diverse coverage of
high-reward expressions~\citep{bengio2021flow,chen2026alphasage}. Despite their different search mechanisms, these approaches typically retain the individual
formula as the generative object. This formula-wise paradigm creates
three related challenges. First, pool context and inter-formula
complementarity remain outside the generative state, so they cannot
determine which formula structure should be modified (e.g., \textit{replacing a moving-average subtree when the pool already contains similar signals}). Second, sequential
neural generators lack a unified mechanism for preserving useful
structures while revising them at the leaf, subtree, or whole-formula
level~\citep{kusner2017grammar,kapur2025diffusion}. Third, pool-level rewards obtained through symbolic execution and
joint fitting provide no direct gradients for learning structural
revisions guided by pool context~\citep{black2024training}. 
These challenges motivate incorporating complete pools into both
structural generation and learning.

To overcome these challenges, we introduce \textbf{\shortname{}}
(Formulaic \textbf{\underline{Alpha}} Discovery via
\textbf{\underline{P}}ool-\textbf{\underline{A}}ware Hierarchical
Discrete \textbf{\underline{DI}}ffusion), a framework that treats the
complete alpha pool as the common unit of generation, evaluation, and
learning. First, grammar-constrained buffer initialization constructs
syntactically valid pool candidates. Second, pool-aware hierarchical
diffusion~\citep{austin2021structured,sahoo2024simple} reconstructs complete pools at the leaf, subtree, and alpha
levels under the current pool context, allowing pool members to guide
structural revisions. Third, reward-guided pool refinement evaluates
joint predictive performance and inner diversity, updates an elite
buffer, and trains the reverse model through reconstruction and
preference learning~\citep{wallace2024diffusion}. Together, these components form a closed loop in
which pool-level feedback shapes subsequent candidate generation.
Our main contributions are summarized as follows:
\begin{itemize}[leftmargin=*]
    \item We reformulate alpha discovery as joint pool search and make pool
    context part of the generative state. This formulation allows
    inter-formula complementarity to govern structural revisions rather
    than serving only as an evaluation signal.
    \item We propose \textbf{\shortname{}}, which combines grammar-constrained
    buffer initialization, pool-aware hierarchical diffusion, and
    reward-guided pool refinement. Reconstruction and pool-level
    preference learning jointly preserve valid structures and steer
    generation toward higher-reward pools.
    \item Extensive experiments across the Chinese and U.S. stock markets demonstrate the effectiveness of \textbf{\shortname{}} in discovering high-quality alpha pools. Remarkably, \textbf{\shortname{}} ranks first on every reported predictive metric while achieving the highest annualized return and Sharpe ratio, as well as the lowest maximum drawdown, across all three datasets.
\end{itemize}

\section{Related Work}
\label{sec:related}

\subsection{Formulaic Alpha Discovery in Quantitative Trading}
A formulaic alpha maps historical market variables to a cross-sectional score, whose predictive quality is commonly measured by its information coefficient (IC) with subsequent returns~\citep{kakushadze2016101,yu2023generating}. Alphas are often deployed as a linearly combined pool in which correlated members add little information, so useful alphas must be both predictive and complementary. Evolutionary methods evolve expression trees through mutation, crossover, and selection~\citep{koza1994genetic,cui2021alphaevolve,zhang2020autoalpha}, but they can be inefficient and converge prematurely. Reinforcement learning generates formulas token by token. AlphaGen rewards each formula by its marginal contribution to the combined pool~\citep{yu2023generating}, and later methods improve its policy optimization~\citep{zhao2025quantfactor} and value estimation~\citep{zhu2025alphaqcm}. AlphaForge adds dynamic factor combination~\citep{shi2025alphaforge}, AlphaSAGE samples structurally diverse formulas over abstract syntax trees (ASTs) with a generative flow network~\citep{chen2026alphasage}, and language-model methods refine formulas with backtest feedback~\citep{tang2025alphaagent,shi2025navigating}.

Across these methods, the pool enters mainly through marginal-contribution rewards, diversity terms, or adaptive combination, while the generator still constructs one formula at a time. These observations motivate frameworks that use the pool to guide generation as well as evaluation. Our design follows this principle: it revises formula subtrees conditioned on the other pool members during generation, and it evaluates and learns from each candidate pool as a whole through its combined predictive performance and internal redundancy.

\subsection{Discrete Diffusion and Preference Optimization}
Discrete diffusion models reverse corruption processes over categorical objects. D3PM introduces absorbing-state transitions for discrete sequences~\citep{austin2021structured}. MDLM and MD4 simplify masked diffusion training~\citep{sahoo2024simple}, SEDD learns discrete diffusion by estimating ratios of the data distribution~\citep{lou2023discrete}, and LLaDA scales masked diffusion to large language models~\citep{nie2026large}. Structured variants corrupt categorical node and edge attributes of graphs, as in DiGress~\citep{vignac2022digress}, or mutate program syntax trees while preserving grammatical validity~\citep{kapur2025diffusion}. These formulations corrupt and reconstruct a single sequence, graph, or program. \shortname{} instead absorbs complete AST subtrees, from individual leaves to whole formulas, and reconstructs each target conditioned on the other members of the alpha pool.

To align diffusion models with downstream objectives, reinforcement learning fine-tunes denoising trajectories with policy gradients~\citep{black2024training}, and discrete counterparts backpropagate differentiable rewards through relaxed sampling~\citep{wang2025fine}. Direct preference optimization (DPO) converts pairwise comparisons into a reference-relative likelihood objective~\citep{rafailov2023direct}. Diffusion-DPO applies this principle with variational likelihood estimates~\citep{wallace2024diffusion}. Discrete extensions apply it to structured binary sequence generation~\citep{borso2025preference} and to masked diffusion language models, where sharing noise between the policy and reference estimates reduces variance~\citep{zhu2026llada}. Our preference signal compares complete pools by combined predictive performance and internal diversity. We evaluate each pair under matched corruption and score both pools with the reconstruction objective used for training. This connects a non-differentiable pool-level financial score to the reverse generator without per-formula rewards.

\section{Preliminary}
\label{sec:preliminary}
\highlight{Pool-aware alpha discovery.}
Pool-reward reinforcement learning~\citep{yu2023generating} can formulate individual-formula search through the expected marginal contribution to the current pool:
\begin{equation}
\small
f_{\mathrm{new}}^{*}
=\arg\max_{f\in\mathcal X}
\mathbb E\!\left[R_{\mathrm{marg}}(f\mid F)\right],
\quad
R_{\mathrm{marg}}(f\mid F)=U(F\uplus[f])-U(F),
\label{eq:rl_generation}
\end{equation}
where $\mathcal X$ is the formula space, $F$ is the current pool,
$\uplus$ inserts a formula into the pool, and $U(F)$ denotes the utility
of a jointly fitted pool. A formula-level generative flow network~\citep{chen2026alphasage} learns a reward-proportional distribution over individual formulas: $P_{\phi}(f)\propto R_{\mathrm{form}}(f)$ for $f\in\mathcal X$,
where $P_{\phi}$ is the learned formula distribution and
$R_{\mathrm{form}}(f)>0$ is the reward assigned to formula $f$. 
Thus, both formulations retain an individual formula as the generation
object, even when pool performance contributes to its reward. 

We reformulate alpha discovery as learning a conditional generative
model $P_{\theta}(F'\mid F)$ over the space
$\mathcal M_K(\mathcal X)$ of size-$K$ pools of canonically distinct
formulas. The model is trained to
favor candidates with stronger predictive performance and inner
diversity, while using the current pool context to determine how their
formulas are modified:
\begin{equation}
\small
F'\sim P_{\theta}(\cdot\mid F),
\quad
F,F'\in\mathcal M_K(\mathcal X),
\label{eq:pool_aware_generation}
\end{equation}
where $F'$ is generated from $F$ through hierarchical corruption and
reverse generation under the current pool context. By sampling from this learned pool-conditional distribution, we extend
the role of the pool from evaluating completed candidates to guiding
their generation, thereby realizing \emph{pool-aware alpha discovery}.

\highlight{Problem formulation.}
\label{sec:budget}
Following this pool-aware formulation, given a market panel
$\{(X_d,y_d)\}$ divided into training, validation, and test periods and a
formula language $\mathcal X$, the task is to discover an alpha pool
$F=[f_1,\ldots,f_K]\in\mathcal M_K(\mathcal X)$. A ridge combiner~\citep{hoerl1970ridge}, fitted
once on the training split and kept fixed afterwards, jointly fits the
ranked member signals:
\begin{equation}
\small
\widehat w_F=\argmin_w n_{\mathrm{tr}}^{-1}\lVert\widetilde y_{\mathrm{tr}}-\widetilde X_{F,\mathrm{tr}}w\rVert_2^2+\gamma\lVert w\rVert_2^2,\;
\widehat y_{F,S}=\widetilde X_{F,S}\widehat w_F,
\label{eq:ridge}
\end{equation}
where $\widetilde X_{F,S}$ and $\widetilde y_S$ are the stacked tie-aware
daily cross-sectional ranks on split $S$, $n_{\mathrm{tr}}$ is the
number of eligible training observations, and $\gamma$ is the
regularization coefficient. The pool utility $U_S(F)$ is the mean daily
cross-sectional Spearman correlation between $\widehat y_{F,S}$ and
$\widetilde y_S$.

Because executing an unseen formula over the historical panel dominates
the search cost, let $\mathcal A$ contain the distinct canonical formulas
submitted to the evaluator and $\mathcal E\subseteq\mathcal A$ those
with executable, eligible, and nonconstant signals. The objective is formulated as follows:
\begin{equation}
\small
U^{\star}(F^{\star})=\max\{U^{\star}(F):F\in\mathcal M_K(\mathcal E)\},\;
|\mathcal A|\leq B,
\label{eq:budget}
\end{equation}
where $U^{\star}$ is the test utility and $B$ is the budget on distinct
canonical submissions. Training data guide search, validation data
select the final pool, and test data are used only for final evaluation.
The set $\mathcal A$ includes first-time submissions from initialization,
reverse generation, and duplicate replacement, including failed
executions, while cached repeats incur no additional budget.

\section{Methodology}
\label{sec:method}
\begin{figure}[t]
 \centering
 \includegraphics[width=\linewidth]{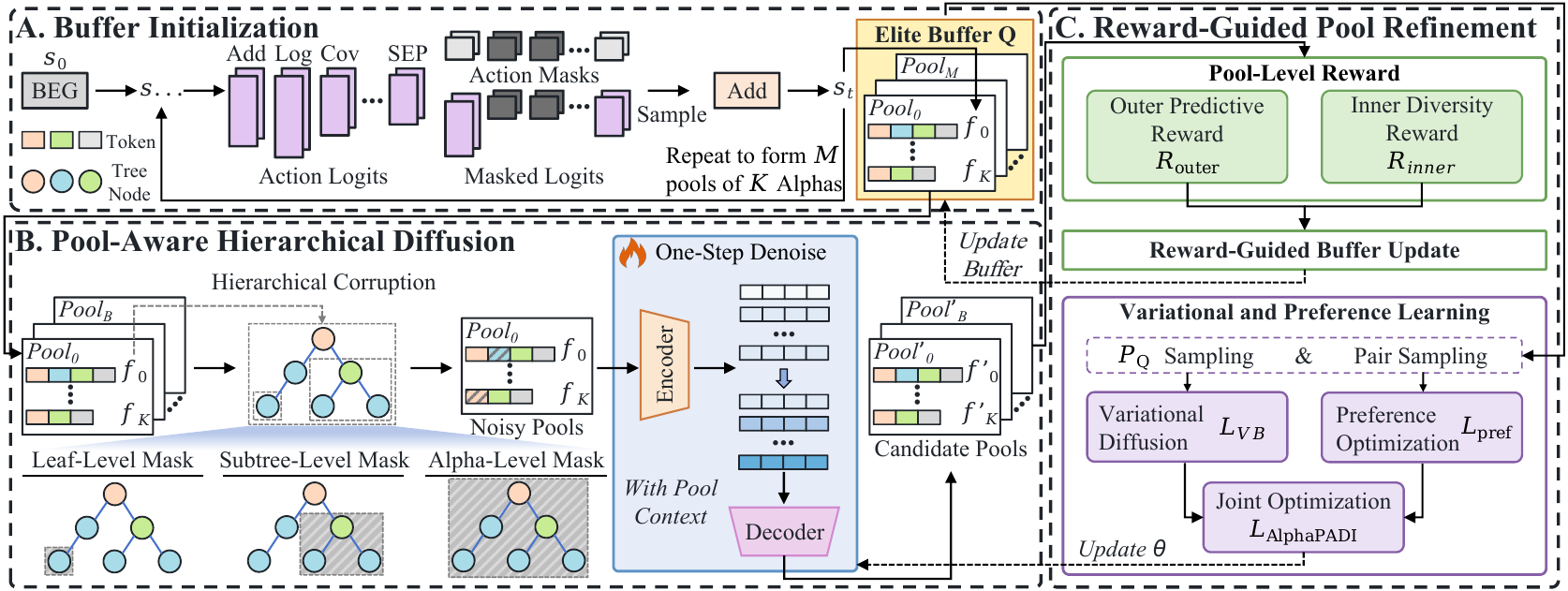}
 \caption{An overview of \shortname{}. A. Buffer initialization: grammar-constrained RPN sampling forms $M$ scored pools of $K$ formulas
 in the elite buffer. B. Pool-aware hierarchical diffusion: buffered pools are corrupted at the leaf, subtree, and alpha levels and reconstructed with the current pool as context. C. Reward-guided pool refinement: Candidates scored by $R_{\mathrm{outer}}$ and $R_{\mathrm{inner}}$ update the buffer, which supplies reconstruction samples and preference pairs
 for $\mathcal L_{\mathrm{AlphaPADI}}$.}
 \label{fig:framework}
\end{figure}

In this section, we propose \shortname{}, which shifts the unit of alpha discovery from individual formulas to pools of complementary formulas. The key insight is that a formula's value depends on its contribution to the joint predictive performance of its pool. We therefore treat the pool as the common unit of generation, evaluation, and learning. As illustrated in Figure~\ref{fig:framework}, \shortname{} comprises \textbf{buffer initialization} (Section~\ref{sec:buffer_initial}), \textbf{pool-aware hierarchical diffusion} (Section~\ref{sec:pool_diffusion}), and \textbf{reward-guided pool refinement} with a joint training objective (Section~\ref{sec:reward_guide}). The notation is summarized in Table~\ref{tab:notation}. We introduce these components below.

\subsection{Buffer Initialization}
\label{sec:buffer_initial}
The elite buffer $\mathcal Q$ stores scored pools that serve as starting
points for refinement and as training data for the generator. We
initialize it through grammar-constrained sampling in Reverse Polish
Notation (RPN)~\citep{yu2023generating} over a vocabulary $\mathcal V$ of
61 feature, operator, window, and constant tokens
(Appendix~\ref{app:features_ops}). The tokens \textsc{beg} and
\textsc{sep} delimit generation, while \textsc{mask} and \textsc{fill}
are reserved for diffusion. Starting from $s_0=\textsc{beg}$ with an
empty stack, features and constants push expressions, unary and binary
operators consume one and two expressions, and each temporal operator
requires a window token. At state $s_t$, the next token is sampled
uniformly from the valid set $\mathcal V_{\mathrm{valid}}(s_t)$, and
\textsc{sep} is valid only when the stack holds one complete expression.
For example, \texttt{(Close 5 TsMean Open Sub SEP)} produces
$\varnothing\!\rightarrow[\texttt{Close}]
\rightarrow[\texttt{Close},\texttt{5}]
\rightarrow[\texttt{TsMean}(\texttt{Close},5)]
\rightarrow[\texttt{TsMean}(\texttt{Close},5),\texttt{Open}]
\rightarrow[\texttt{Sub}(\texttt{TsMean}(\texttt{Close},5),\texttt{Open})]$.
Sequences that exceed $L_{\max}=20$ tokens, fail during execution, or
produce ineligible or zero-variance signals are resampled.

Initial pools are sampled independently of rewards and external
factor libraries. We group $K$ canonically distinct formulas
per pool and construct $M\le C_{\mathcal Q}$ distinct pools:
\begin{equation}
\small
F_0^{(m)}=\left(f_1^{(m)},\ldots,f_K^{(m)}\right),
\mathcal Q_0=\left\{\left(F_0^{(m)},R(F_0^{(m)})\right)\right\}_{m=1}^{M}.
\label{eq:elite_buffer_initialization}
\end{equation}
Each pool is scored by the reward in Eq.~\ref{eq:pool_reward}. All
canonical formulas submitted during initialization, including resampled
ones, enter $\mathcal A$ in Eq.~\ref{eq:budget}, and the scored pools
initialize both $\mathcal Q$ and the validation archive $\mathcal C$.
Before its first proposal, the generator performs $S_0$ updates of
Eq.~\ref{eq:variational_diffusion} on $\mathcal Q_0$.

\subsection{Pool-Aware Hierarchical Diffusion}
\label{sec:pool_diffusion}
Section~\ref{sec:buffer_initial} provides scored pools. Here, we introduce how to revise them in the context of the other pool members: hierarchical corruption masks subtrees at multiple structural scales, and pool-aware reverse generation reconstructs them conditioned on the current pool. Applied to a proposal batch $\mathcal B_{\mathrm c}$ of $N_{\mathrm c}$ buffered pools, this process produces candidates for evaluation.

\highlight{Hierarchical corruption.}
Let $F_0=(f_1,\ldots,f_K)$ be a clean pool and $\mathbf m$ denote
\textsc{mask}. Each formula is parsed into an abstract syntax tree (AST),
in which a node is absorbable if its RPN span forms a complete
expression. For $f=\operatorname{TsMean}(\mathrm{Close},5)-\mathrm{Open}$,
serialized as \texttt{(Close 5 TsMean Open Sub)}, masking the \textit{leaf}
$\mathrm{Close}$, the \textit{subtree} $\operatorname{TsMean}(\mathrm{Close},5)$,
or the \textit{root} yields $\operatorname{TsMean}(\mathbf m,5)-\mathrm{Open}$,
$\mathbf m-\mathrm{Open}$, or $\mathbf m$, corrupting $f$ at the \textit{leaf},
\textit{subtree}, and \textit{alpha} levels, respectively (Component B in Figure~\ref{fig:framework}).
Each absorbable node $u$ receives an independent clock
$U_u\sim\operatorname{Uniform}(0,1)$ shared across levels, and
$\Phi_{\rho}$ masks the highest subtrees with $U_u\leq\rho$, ancestors
taking precedence over descendants. The corrupted pools are formulated as follows:
\begin{equation}
\small
F_t=\Phi_{\rho_t}(F_0),\quad
\rho_t=1-\cos\bigl(\pi t/(2T)\bigr),\quad
t\in\{0,\ldots,T\},\quad
F_T=(\mathbf m,\ldots,\mathbf m).
\label{eq:hier_corruption}
\end{equation}
Let $H_t$ denote the exposed masks in $F_t$ and $Y_u$ the state of
$u\in H_t$ in $F_{t-1}$. The shared clocks induce the posterior
$q(Y_u\mid F_t,F_0)=a_t\delta_{\mathbf m}(Y_u)+w_t\mu_{t-1}^{u,F_0}(Y_u)$
(Proposition~\ref{prop:app_posterior}), where $a_t=\rho_{t-1}/\rho_t$
and $w_t=(\rho_t-\rho_{t-1})/\rho_t$ are the mask-retention and reveal
probabilities, and $\mu_{t-1}^{u,F_0}$ is the level-$(t-1)$ distribution
of the subtree at $u$ with $u$ visible and its descendants corrupted.

\highlight{Pool-aware reverse generation.}
At transition $t\rightarrow t-1$, each position in $H_t$ enters the
reveal set $J_t$ independently with probability $w_t$. Selected positions
keep their RPN order within each formula and are uniformly interleaved
across formulas into $(u_{t,1},\ldots,u_{t,|J_t|})$, and descendant masks
generated at level $t$ are processed at the next level. For the $r$-th
position $u=u_{t,r}$, let $G=G^{(r-1)}=(g_1,\ldots,g_K)$ be the current
pool. A shared Transformer $\psi$ mean-pools each formula, a
\textsc{fill} copy of row $i(u)$ marks the queried position, and a set
encoder aggregates them:
\begin{equation}
\small
\mathbf c_{G,u,t}
=
\operatorname{Mean}\!\left[
\operatorname{Enc}_{\mathrm s}
\left(
\{\psi(g_i)\}_{i=1}^{K},
\psi(g_{i(u)}^{\mathrm{fill}(u)})
\right)
\right]+E_t(t),
\label{eq:pool_context}
\end{equation}
where $E_t(t)$ embeds the diffusion level, and the context $c_{G,u,t}$
summarizes the current pool $G$ for position $u$.
The autoregressive decoder uses this context to sample a
grammar-valid $\hat{Y}_u \sim r_\theta(\cdot \mid G,u,t)$.
We update $G^{(r)}=\operatorname{Splice}(G,u,\hat{Y}_u)$
before filling the next mask.
This lets each reconstruction use earlier edits and capture
dependencies across formulas (Proposition~\ref{prop:app_reverse}).
Because $w_1=1$, every remaining mask is reconstructed at $t=1$, yielding
complete and syntactically valid pools.

\highlight{Candidate generation.}
Each pool in $\mathcal B_{\mathrm c}$ is corrupted to
$t_0=\max\{1,\operatorname{round}(\alpha T)\}$, where $\alpha\in(0,1]$
controls the edit strength, and reconstructed from $t_0$ to $0$. 
After canonicalization, repeated members are replaced by grammar-valid random
formulas so that each pool contains $K$ distinct formulas. Members not
yet in $\mathcal A$ are submitted to the evaluator, and a candidate whose
unseen members would exceed the remaining budget is discarded. Eligible
candidates (Section~\ref{sec:reward_guide}) are scored and added,
together with the initial pools, to the archive $\mathcal C$. Search
stops when the budget is exhausted or $N_{\mathrm{stall}}$ consecutive
rounds add no formula to $\mathcal A$, and the final pool is
$\widehat F=\argmax_{F\in\mathcal C}U_{\mathrm{val}}(F)$.

\subsection{Reward-Guided Pool Refinement and Training Objective}
\label{sec:reward_guide}
Section~\ref{sec:pool_diffusion} produces candidate pools. Here, we score them with a pool-level reward and use this reward in two ways: to refine the elite buffer and to train the reverse model through reconstruction and preference objectives.

\highlight{Pool-level reward.}
For each date $d$, member $f_i$ produces a signal $\mathbf a_d(f_i)$,
whose tie-aware cross-sectional ranks $\widetilde{\mathbf a}_{i,d}$ are
computed over assets with finite values for all members and the target.
The outer reward is the training performance of the ridge combiner in
Eq.~\ref{eq:ridge}, whose penalty is selected within the training period.
Ridge controls the combination weights but does not prevent redundant
formulas from occupying the $K$ pool positions. We therefore measure
redundancy by
$c_{ij}=|\mathcal T_{ij}|^{-1}\sum_{d\in\mathcal T_{ij}}
|\operatorname{corr}(\widetilde{\mathbf a}_{i,d},\widetilde{\mathbf a}_{j,d})|$
and let $d_i=\max_{j\ne i}c_{ij}$. The reward components, the pool
reward, and the buffer sampling distribution are formulated as follows:
\begingroup
\small 
\begin{gather}
R_{\mathrm{outer}}(F)
=\frac{1}{|\mathcal T_F|}\sum_{d\in\mathcal T_F}
\operatorname{Spearman}\bigl(\widehat{\mathbf y}_d(F),\mathbf y_d\bigr),
\quad
R_{\mathrm{inner}}(F)
=-\frac{1}{K}\sum_{i=1}^{K}[d_i-\delta]_+^2,
\label{eq:pool_reward_components}\\
R(F)=R_{\mathrm{outer}}(F)
+w_{\mathrm{inner}}R_{\mathrm{inner}}(F),
\label{eq:pool_reward}\\
P_{\mathcal Q}(F_i)
=\frac{\exp(z_i/T_{\mathrm r})}
{\sum_j\exp(z_j/T_{\mathrm r})},
\quad
z_i=\frac{R_i-\overline R}
{\operatorname{sd}_{\mathcal Q}(R)+\epsilon},
\label{eq:buffer_sampling}
\end{gather}
\endgroup
where $\mathcal T_{ij}$ contains the dates with jointly finite,
nonconstant ranked signals ($c_{ij}=0$ if $\mathcal T_{ij}=\varnothing$),
$\mathcal T_F$ contains the dates with nonconstant pool predictions and
targets, $[x]_+=\max(x,0)$, $\delta\in[0,1]$ is the tolerated
similarity, $w_{\mathrm{inner}}>0$ weights diversity, $\overline R$ and $\operatorname{sd}_{\mathcal Q}(R)$ are the mean and standard deviation of the buffered rewards, $\epsilon>0$
stabilizes reward standardization, and $T_{\mathrm r}$ controls the
sampling concentration. A pool is excluded if it has no eligible date or
a member without a finite, nonconstant training signal. The elite buffer
retains the highest-reward canonical pools up to capacity $C_{\mathcal Q}$
and samples them from $P_{\mathcal Q}$.

\highlight{Reconstruction objective.}
To keep the reverse model able to reconstruct high-reward pools, we
minimize a variational bound on the negative log-likelihood of buffered
pools. Since the reveal probabilities and the reconstruction order are
shared by the forward posterior and the reverse model, they cancel, and
the bound reduces, up to a constant, to $T$ times the expected negative
log-likelihood of the revealed targets under a uniformly sampled level
(Proposition~\ref{prop:app_elbo}). We estimate this bound on a minibatch
$\mathcal B_{\mathcal Q}=\{F_0^{(b)}\}_{b=1}^{N}$ sampled from
$P_{\mathcal Q}$. For each pool, we draw a level
$t_b\sim\operatorname{Uniform}(\{1,\ldots,T\})$ and its reveal set $J_b$,
and teacher-force the targets $Y_{b,r}$, the states of the revealed
positions $u_{b,r}$ at level $t_b-1$, into the pool state $G_b^{(r)}$,
starting from $G_b^{(0)}=F_{t_b}^{(b)}$. Normalizing each pool by the
number and length of its targets gives:
\begin{equation}
\small
\widehat{\mathcal L}_{\mathrm{VB}}(\mathcal B_{\mathcal Q},\theta)
=\frac{T}{N}\sum_{b=1}^{N}\ell_\theta^{(b)},
\quad
\ell_\theta^{(b)}
=-\frac{1}{n_b}\sum_{r=1}^{|J_b|}
\frac{\log r_\theta\bigl(Y_{b,r}\mid G_b^{(r-1)},u_{b,r},t_b\bigr)}{L_{b,r}},
\label{eq:variational_diffusion}
\end{equation}
where $L_{b,r}$ is the length of $Y_{b,r}$ including \textsc{sep} and
$n_b=\max\{1,|J_b|\}$. The normalization makes $-\ell_\theta^{(b)}$
comparable across pools with different numbers and lengths of targets,
which the preference objective below relies on.

\highlight{Pool-level preference objective.}
We convert pool rewards into preferences by drawing two distinct pools
from $P_{\mathcal Q}$, denoting the higher-reward pool by
$F_{\mathrm w}$ and the lower-reward pool by $F_{\mathrm l}$, and
retaining the pair when
$R(F_{\mathrm w})-R(F_{\mathrm l})\geq
m_{\mathrm{dpo}}\operatorname{sd}_{\mathcal Q}(R)$, where $m_{\mathrm{dpo}}\ge0$
sets the reward margin in units of the buffer reward standard deviation.
Ties and pairs with empty reconstruction sets are omitted, and
$\mathcal P_{\mathcal Q}$ denotes the distribution of accepted pairs. Both pools of a pair share
the diffusion level and the pseudorandom stream for corruption, reveal
selection, and ordering, while each consumes the stream according to its
own AST and thus keeps its marginal corruption process. The objective is formulated as follows:
\begin{equation}
\small
\mathcal L_{\mathrm{pref}}(\theta)
=
-\mathbb E_{\mathcal P_{\mathcal Q}}
\log\sigma\!\left(
\beta\left[
\left(\ell_\theta^{\mathrm l}-\ell_\theta^{\mathrm w}\right)
-
\left(
\ell_{\theta_{\mathrm{ref}}}^{\mathrm l}
-\ell_{\theta_{\mathrm{ref}}}^{\mathrm w}
\right)
\right]
\right),
\label{eq:pool_preference}
\end{equation}
where $\ell_\vartheta^{a}$ is the single-pool loss of
Eq.~\ref{eq:variational_diffusion} for $a\in\{\mathrm w,\mathrm l\}$ and
$\vartheta\in\{\theta,\theta_{\mathrm{ref}}\}$, evaluated under the same
corruption variables, $\theta_{\mathrm{ref}}$ is a frozen copy of the
reverse model after the $S_0$ initialization updates, $\sigma$ is the
sigmoid function, and $\beta>0$ controls the preference strength and
absorbs the factor $T$. Minimizing Eq.~\ref{eq:pool_preference}
increases the reference-relative reconstruction score of the
higher-reward pool, and Proposition~\ref{prop:app_pref} derives it from
KL-regularized pool-level reward maximization.

\highlight{Joint training objective.}
We combine the preference objective with reconstruction on an
independently sampled minibatch:
\begin{equation}
\small
\mathcal L_{\mathrm{AlphaPADI}}(\theta)
=
\mathcal L_{\mathrm{pref}}(\theta)
+
\lambda_{\mathrm{VB}}
\widehat{\mathcal L}_{\mathrm{VB}}
(\mathcal B_{\mathcal Q},\theta),
\label{eq:joint_objective}
\end{equation}
where $\lambda_{\mathrm{VB}}>0$ weights reconstruction, anchoring
the generator to the buffered pools while the preference term contrasts
them. If no pair is accepted, the update uses reconstruction only.
Algorithm~\ref{alg:pool_refinement} in Appendix~\ref{app:algorithm}
summarizes the complete procedure, and Appendix~\ref{app:complexity}
analyzes its complexity.

\subsection{Discussion of Alpha Combination}
\label{sec:discussion}
We evaluate every pool, including those returned by the baselines, with
the ridge combiner in Eq.~\ref{eq:ridge}, fitted once on the training
period and frozen during validation and testing. This aligns evaluation with the predictive component of \shortname{}'s pool-level reward, holds the combination procedure fixed across pools, and uses the same backtesting protocol for all methods (Appendix~\ref{app:implementation_details}).

\section{Experiments}
\label{sec:exp}

\subsection{Experimental Setup}
\label{sec:experimental_setup}
\highlight{Datasets.}
We evaluate CSI300, CSI500, and S\&P500, representing large-cap Chinese,
mid-cap Chinese, and large-cap U.S. equities, respectively. Following
formulaic alpha studies~\citep{shi2025alphaforge,chen2026alphasage},
we use Qlib daily market data~\citep{yang2020qlib} and
predict the 20-trading-day forward return. For consistent
cross-market evaluation, three universes share a chronological split: 2016-01-01--2020-12-31 for training,
2021-01-01--2022-12-31 for validation, and
2023-01-01--2025-12-31 for testing. Dates whose return horizon crosses a
split boundary are excluded.

\highlight{Evaluation metrics.}
We evaluate predictive performance using the information coefficient (IC), rank information coefficient (RIC), and their information ratios (ICIR and RICIR), and portfolio performance using annualized return (AR), Sharpe ratio (SR), and maximum drawdown (MDD). Lower MDD is better, and higher values are better for the other metrics. 
Metric definitions and the shared backtesting protocol are provided in Appendix~\ref{app:evaluation_protocol}.

\highlight{Baselines.}
We compare \shortname{} against nine baselines spanning the main
paradigms of alpha mining, including the predefined factor library
Alpha158~\citep{yang2020qlib}, the supervised predictors MLP~\citep{murtagh1991multilayer} and
XGBoost~\citep{chen2016xgboost}, genetic programming (GP)~\citep{chen2021empirical}, the
reinforcement learning methods AlphaGen~\citep{yu2023generating} and
AlphaQCM~\citep{zhu2025alphaqcm}, the generative-predictive framework
AlphaForge~\citep{shi2025alphaforge}, the LLM-based miner
AlphaAgent~\citep{tang2025alphaagent}, and the GFlowNet-based
AlphaSAGE~\citep{chen2026alphasage}. Appendix~\ref{app:baselines} details all baselines and their settings.

\highlight{Implementation details.}
All experiments run on 8 NVIDIA L40S GPUs. Non-LLM automated
alpha mining methods share a budget of $40{,}000$ formula evaluations,
converted from native units, e.g., $2\times10^5$ steps (AlphaGen)
or $4\times10^4$ episodes (AlphaSAGE) (Appendix~\ref{app:baselines} specifies each method's stopping rule). We tune the learning rate over
$\{2\times10^{-5},3\times10^{-4},1\times10^{-3}\}$,
$w_{\mathrm{inner}}$ over $\{1.0,2.0,5.0,10.0\}$, and
$\lambda_{\mathrm{VB}}$ over $\{0.1,0.3,0.5,0.7\}$ on the validation
split, and select the pool size $K\in\{20,50\}$ likewise for
non-LLM automated alpha mining baselines and \shortname{}. We report
the mean performance over five runs and assess significance against the
strongest baseline at $p<0.05$.
Appendix~\ref{app:hyper_settings} lists the architecture and
hyperparameters, and Appendix~\ref{app:hyperparameter_analyses} analyzes sensitivity to $w_{\mathrm{inner}}$ and $\lambda_{\mathrm{VB}}$.

\subsection{Overall Performance}

\begin{table*}[t]
\centering
\caption{
Predictive and portfolio performance across three equity universes.
RIC and RICIR denote RankIC and RankICIR.
Higher values are better except for MDD.
Best results are \textbf{bolded}, and second-best results are
\underline{underlined}.
}
\label{tab:main_results}
\footnotesize
\rmfamily
\setlength{\tabcolsep}{2.8pt}
\renewcommand{\arraystretch}{1.0}

\scalebox{0.885}{%
\begin{tabularx}{\textwidth}{
>{\centering\arraybackslash}m{1.55cm}
>{\centering\arraybackslash}m{0.75cm}|
*{4}{>{\centering\arraybackslash}X}|
*{3}{>{\centering\arraybackslash}X}
}
\hline
\multirow[c]{2}{1.55cm}{\centering\textbf{Method}}
& \multirow[c]{2}{0.75cm}{\centering\textbf{Year}}
& \multicolumn{4}{c|}{\textbf{Predictive Performance}}
& \multicolumn{3}{c}{\textbf{Portfolio Performance}} \\
\cline{3-6}\cline{7-9}
& &
\textbf{IC$\uparrow$}
& \textbf{ICIR$\uparrow$}
& \textbf{RIC$\uparrow$}
& \textbf{RICIR$\uparrow$}
& \textbf{AR$\uparrow$}
& \textbf{SR$\uparrow$}
& \textbf{MDD$\downarrow$} \\
\hline

\multicolumn{9}{c}{\textbf{CSI300}} \\
\hline
Alpha158 & 2020
& 0.0266 & 0.1631 & 0.0250 & 0.1389
& 0.0751 & \underline{2.1532} & 0.2274 \\
XGBoost & 2016
& 0.0192 & 0.1024 & 0.0273 & 0.1386
& 0.0591 & 1.8645 & 0.2046 \\
MLP & --
& 0.0254 & 0.1458 & 0.0362 & 0.2047
& 0.0723 & 1.8562 & 0.2755 \\
GP & 2021
& 0.0229 & 0.2142 & 0.0326 & 0.2788
& 0.0721 & 1.6105 & 0.3247 \\
AlphaGen & 2023
& 0.0290 & 0.1771 & 0.0380 & 0.2065
& 0.0657 & 2.1058 & \underline{0.1829} \\
AlphaQCM & 2025
& \underline{0.0352} & \underline{0.2291}
& \underline{0.0544} & 0.3371
& 0.0733 & 1.4941 & 0.3787 \\
AlphaForge & 2025
& 0.0276 & 0.1391 & 0.0516 & 0.2501
& \underline{0.0940} & 1.5646 & 0.3669 \\
AlphaAgent & 2025
& 0.0251 & 0.1617 & 0.0510 & \underline{0.3388}
& 0.0713 & 1.3969 & 0.3954 \\
AlphaSAGE & 2026
& 0.0228 & 0.1791 & 0.0351 & 0.2656
& 0.0752 & 1.5401 & 0.3939 \\
\hline
\rowcolor{gray!22}
\textbf{AlphaPADI} & 2026
& \textbf{0.0439} & \textbf{0.3024}
& \textbf{0.0659} & \textbf{0.4147}
& \textbf{0.1058} & \textbf{2.8619} & \textbf{0.1412} \\
\hline

\multicolumn{9}{c}{\textbf{CSI500}} \\
\hline
Alpha158 & 2020
& 0.0108 & 0.0952 & 0.0163 & 0.1327
& 0.1188 & 2.5419 & 0.2565 \\
XGBoost & 2016
& 0.0082 & 0.0564 & 0.0251 & 0.1810
& 0.0957 & 2.3123 & 0.2246 \\
MLP & --
& 0.0086 & 0.0648 & 0.0277 & 0.2126
& 0.0898 & 1.9643 & 0.3198 \\
GP & 2021
& 0.0111 & 0.0936 & 0.0251 & 0.1892
& 0.1132 & 2.6883 & 0.2366 \\
AlphaGen & 2023
& 0.0203 & 0.1571 & 0.0379 & 0.2641
& 0.1233 & 2.9673 & 0.1830 \\
AlphaQCM & 2025
& 0.0167 & 0.1232 & 0.0453 & 0.2996
& 0.1154 & 2.5779 & 0.2190 \\
AlphaForge & 2025
& 0.0184 & \underline{0.2583} & 0.0269
& \underline{0.3484}
& 0.1175 & 2.2819 & 0.3198 \\
AlphaAgent & 2025
& 0.0092 & 0.0693 & 0.0278 & 0.2104
& 0.1158 & 2.3852 & 0.2437 \\
AlphaSAGE & 2026
& \underline{0.0235} & 0.1413 & \underline{0.0601}
& 0.3477
& \underline{0.1372} & \underline{3.1978}
& \underline{0.1687} \\
\hline
\rowcolor{gray!22}
\textbf{AlphaPADI} & 2026
& \textbf{0.0444} & \textbf{0.2855}
& \textbf{0.0763} & \textbf{0.4578}
& \textbf{0.1435} & \textbf{3.5124} & \textbf{0.1350} \\
\hline

\multicolumn{9}{c}{\textbf{S\&P500}} \\
\hline
Alpha158 & 2020
& 0.0221 & 0.1482 & 0.0194 & 0.1291
& 0.1572 & 3.8617 & 0.1204 \\
XGBoost & 2016
& 0.0120 & 0.1079 & 0.0073 & 0.0688
& 0.1498 & 4.0652 & 0.1096 \\
MLP & --
& 0.0175 & 0.1893 & 0.0143 & 0.1588
& 0.1517 & 4.5017 & 0.1041 \\
GP & 2021
& 0.0142 & 0.3193 & 0.0113 & 0.2700
& 0.1347 & 4.3319 & 0.0958 \\
AlphaGen & 2023
& 0.0174 & 0.1298 & 0.0103 & 0.0786
& 0.1571 & 3.9775 & 0.1102 \\
AlphaQCM & 2025
& 0.0270 & 0.2620 & 0.0145 & 0.1458
& \underline{0.1704} & 4.7981 & 0.1064 \\
AlphaForge & 2025
& 0.0315 & \underline{0.4414}
& \underline{0.0220} & \underline{0.3122}
& 0.1624 & \underline{5.3919} & \underline{0.0952} \\
AlphaAgent & 2025
& 0.0193 & 0.2153 & 0.0138 & 0.1462
& 0.1428 & 4.6385 & 0.0987 \\
AlphaSAGE & 2026
& \underline{0.0319} & 0.3099
& \underline{0.0220} & 0.2151
& 0.1679 & 4.6631 & 0.0997 \\
\hline
\rowcolor{gray!22}
\textbf{AlphaPADI} & 2026
& \textbf{0.0515} & \textbf{0.5020}
& \textbf{0.0452} & \textbf{0.4122}
& \textbf{0.2061} & \textbf{5.8423} & \textbf{0.0860} \\
\hline
\end{tabularx}%
}
\end{table*}
Table~\ref{tab:main_results} reports predictive and portfolio performance on the three stock universes, and Figure~\ref{fig:cr_plot_csi300} shows the CSI300 cumulative returns over the test period. 
Backtests on CSI500 and S\&P500, initialization comparisons, additional
portfolio metrics, and search efficiency under the shared budget are
provided in Appendices~\ref{app:backtesting},~\ref{app:cold_start},
~\ref{app:other_metrics}, and~\ref{app:efficiency}.

\noindent\begin{minipage}{\textwidth}
\setlength{\intextsep}{4pt}
\begin{wrapfigure}{r}{0.50\columnwidth}
    \centering
    \includegraphics[width=\linewidth]{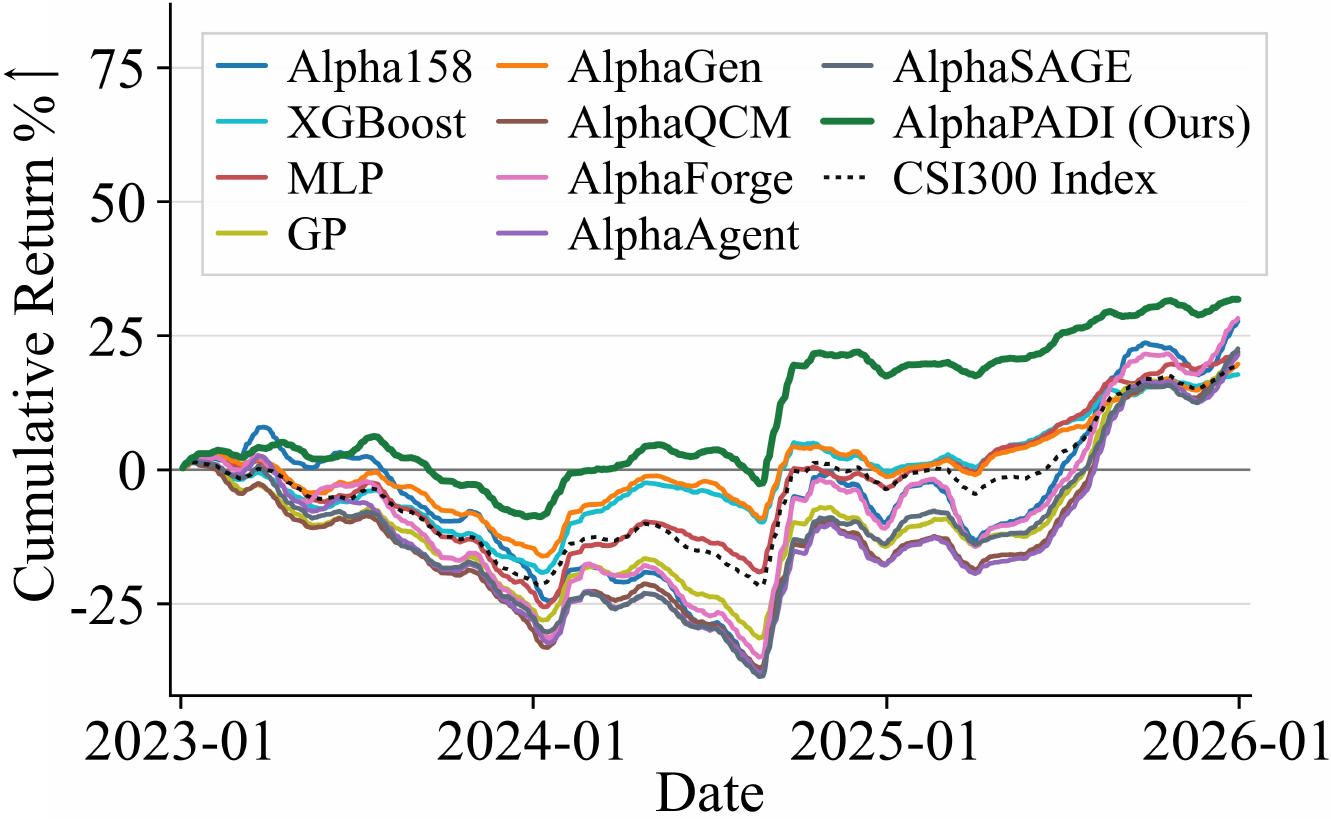}
    \vspace{-4mm}
    \caption{Cumulative returns on CSI300 (2023--2025): \shortname{}, baseline methods, and the CSI300 index.}
    \label{fig:cr_plot_csi300}
\end{wrapfigure}
\textbf{Obs 1: \shortname{} achieves the best performance on all seven metrics in all three universes.} Its RIC reaches $0.0659$, $0.0763$, and $0.0452$ on CSI300, CSI500, and S\&P500, improving over the strongest baseline by $21.1\%$, $27.0\%$, and $105.5\%$, respectively. These gains are also reflected in portfolio metrics: AlphaPADI achieves the highest AR and SR across all three markets, while reducing MDD relative to the best baseline by 22.8\%, 20.0\%, and 9.7\% on CSI300, CSI500, and S\&P500, respectively.  Under a shared protocol, searching over complete pools yields more robust combinations than generating formulas individually. \\
\textbf{Obs 2: No baseline is consistently the strongest.}
The second-best entries in Table~\ref{tab:main_results} span six baselines, and the leading baseline varies by both universe and metric: AlphaQCM ranks second in IC on CSI300, AlphaSAGE on CSI500, and AlphaForge in ICIR and SR on S\&P500. \\
\textbf{Obs 3: \shortname{} leads from the 2024 drawdown through recovery.}
On CSI300, \shortname{} has a shallower trough during the decline before September 2024 and finishes above all baselines after the subsequent recovery (Figure~\ref{fig:cr_plot_csi300}), consistent with its lower MDD and higher SR.

\end{minipage}

\newcommand{\cmark}{\ding{51}}
\newcommand{\xmark}{\ding{55}}

\newcolumntype{C}[1]{%
    >{\centering\arraybackslash}m{#1}%
}
\newcolumntype{Y}{%
    >{\centering\arraybackslash}X%
}

\begin{table*}[htb]
\centering

\caption{
Ablation study of \shortname{} on CSI300. The base model is Diffusion w/o pool context.
$\mathcal L_{\mathrm{VB}}$ denotes pool-aware hierarchical diffusion in Eq.~\ref{eq:variational_diffusion}.
$R_{\mathrm{outer}}$ and $R_{\mathrm{inner}}$ denote the
outer predictive reward and inner diversity reward in Eq.~\ref{eq:pool_reward_components},
respectively.
$\mathcal L_{\mathrm{pref}}$ denotes pool-level preference optimization in Eq.~\ref{eq:pool_preference}.
Best results are \textbf{bolded}, and second-best results are
\underline{underlined}.
}

\label{tab:ablation}

\footnotesize
\rmfamily
\setlength{\tabcolsep}{2.8pt}
\renewcommand{\arraystretch}{1.0}

\begin{tabularx}{0.9\textwidth}{
    *{11}{Y}
}

\toprule[1.1pt]

\multicolumn{4}{c}{\textbf{Included Components}}
&
\multicolumn{4}{c}{\textbf{Predictive Performance}}
&
\multicolumn{3}{c}{\textbf{Portfolio Performance}}
\\[-1pt]

\cmidrule(r{0.35em}){1-4}
\cmidrule(l{0.35em}r{0.35em}){5-8}
\cmidrule(l{0.35em}){9-11}

\textbf{$\mathcal L_{\mathrm{VB}}$}
&
$\mathbf{R}_{\mathrm{outer}}$
&
$\mathbf{R}_{\mathrm{inner}}$
&
\textbf{$\mathcal L_{\mathrm{pref}}$}
&
\textbf{IC$\uparrow$}
&
\textbf{ICIR$\uparrow$}
&
\textbf{RIC$\uparrow$}
&
\textbf{RICIR$\uparrow$}
&
\textbf{AR$\uparrow$}
&
\textbf{SR$\uparrow$}
&
\textbf{MDD$\downarrow$}
\\

\midrule

\xmark
&
\xmark
&
\xmark
&
\xmark
&
0.0215
&
0.1434
&
0.0334
&
0.2060
&
0.0780
&
1.9722
&
0.2708
\\

\cmark
&
\xmark
&
\xmark
&
\xmark
&
0.0276
&
0.2263
&
0.0386
&
0.3019
&
0.0731
&
1.9039
&
0.2249
\\

\cmark
&
\cmark
&
\xmark
&
\xmark
&
0.0354
&
\underline{0.2684}
&
0.0463
&
\underline{0.3255}
&
0.0881
&
2.3414
&
0.2333
\\

\cmark
&
\cmark
&
\cmark
&
\xmark
&
\underline{0.0396}
&
0.2510
&
\underline{0.0531}
&
0.3130
&
\underline{0.0960}
&
\underline{2.6810}
&
\underline{0.1915}
\\

\specialrule{\lightrulewidth}{0pt}{0pt}

\rowcolor{gray!10}
\cmark
&
\cmark
&
\cmark
&
\cmark
&
\textbf{0.0439}
&
\textbf{0.3024}
&
\textbf{0.0659}
&
\textbf{0.4147}
&
\textbf{0.1058}
&
\textbf{2.8619}
&
\textbf{0.1412}
\\

\specialrule{1.1pt}{0pt}{0pt}

\end{tabularx}
\end{table*}

\begin{figure}[!t]
    \centering
    \includegraphics[width=0.92\textwidth]{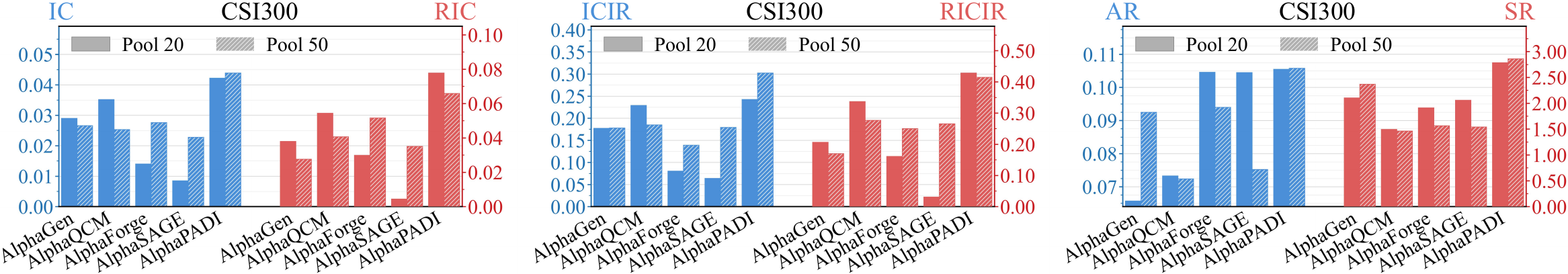}
\caption{Performance of five alpha-mining methods with pool sizes $K=20$ (solid) and $K=50$ (hatched) on CSI300. Blue and red bars use the corresponding left and right metric axes.}
    \label{fig:pool_size_csi300}
    \vspace{-2mm}
\end{figure}

\subsection{Pool-Aware and Component Analysis}
\label{sec:ablation}

Table~\ref{tab:ablation} adds components incrementally on CSI300 under
identical budget and evaluation. The base model reconstructs masked subtrees without pool context and, lacking a pool reward, buffers the latest candidates and samples them uniformly. The second row adds pool context
$\mathbf c_{G,u,t}$. Rows three and four rank and sample buffered pools
by Eq.~\ref{eq:pool_reward} with $w_{\mathrm{inner}}=0$
($R_{\mathrm{outer}}$ alone) and $w_{\mathrm{inner}}>0$. The last row
adds $\mathcal L_{\mathrm{pref}}$, which needs reward-ordered pairs and
therefore follows both rewards.

\textbf{Obs 4: Every component improves IC and RIC, and preference optimization gives the largest gain in RIC and RICIR.}
IC and RIC increase monotonically along the sequence, from $0.0215$ and $0.0334$ to $0.0439$ and $0.0659$. Adding $\mathcal L_{\mathrm{pref}}$ raises RIC from $0.0531$ to $0.0659$ and RICIR from $0.3130$ to $0.4147$, the largest single increments of both metrics, which suggests that reward-ordered pool comparisons provide a learning signal beyond reward-guided buffer selection.\\
\textbf{Obs 5: The diversity reward mainly improves portfolio risk.}
Adding $R_{\mathrm{inner}}$ reduces MDD from $0.2333$ to $0.1915$ and raises SR from $2.3414$ to $2.6810$, while ICIR decreases slightly from $0.2684$ to $0.2510$. Penalizing redundant members thus trades a small loss in predictive stability for lower drawdown. Pool context alone improves ICIR and RICIR but not AR or SR, indicating that pool-level rewards are needed to turn pool-aware reconstruction into trading gains.

\subsection{Effect of Pool Size}
\label{sec:pool_size}
Figure~\ref{fig:pool_size_csi300} compares pools of $K=20/50$ on CSI300 under the same budget of $40{,}000$ formula evaluations for the five methods that construct alpha pools, whereas Table~\ref{tab:main_results} reports each of them at its validation-selected size. The CSI500 and S\&P500 results are provided in Appendix~\ref{app:pool_size}.

\textbf{Obs 6: \shortname{} leads at both pool sizes.}
At both $K=20$ and $50$, \shortname{} attains the highest IC, RIC, ICIR, RICIR, and SR, and its AR is the highest or on par with the strongest baseline.\\
\textbf{Obs 7: A larger pool does not uniformly help.}
From $K=20$ to $50$, predictive and portfolio metrics often move in opposite directions: AlphaForge and AlphaSAGE gain IC and RIC but lose AR and SR, AlphaGen shows the reverse pattern, and AlphaQCM declines on most metrics. For \shortname{}, IC, ICIR, and SR increase, while RIC and RICIR decrease. Pool capacity should therefore be evaluated through the combined predictor rather than assumed to help, which motivates optimizing the pool-level reward directly.

\section{Conclusion}
\label{sec:conclusion}
We presented \shortname{}, a framework that treats the alpha pool as the common unit of generation, evaluation, and learning. Pool-aware hierarchical discrete diffusion modifies formula structures using the current pool context, while predictive and redundancy rewards guide buffer refinement and preference learning. Experiments on three stock universes show higher reported predictive and portfolio performance than the evaluated baselines. Cumulative ablations characterize the incremental effects of the components, and additional analyses examine pool size, initialization, and hyperparameter sensitivity. The findings support pool-level search under the evaluated settings.

\par\endgroup

\subsection*{AI Use Statement}
In this work, we used generative AI tools to obtain feedback on the methodology and experimental evaluation, check the consistency of descriptions of reported results with the tables, and suggest potentially relevant references. We also used these tools to improve the organization and clarity of selected passages, refine English wording, and assist with LaTeX formatting. The authors reviewed all AI-assisted suggestions, decided which to adopt, and finalized the manuscript. The authors take full responsibility for the paper's text, claims, analyses, and results.

\subsection*{Ethics statement}

AlphaPADI studies formulaic alpha discovery using historical equity-market data from China and the United States. The experiments consist of offline signal evaluation and backtesting; they involve no human participants, personal data, brokerage accounts, or live market orders. The reported results depend on the studied markets, periods, and backtesting assumptions. They should not be interpreted as investment recommendations or guarantees of future returns. Deploying a discovered strategy would require further assessment of execution costs and financial risk.

\subsection*{Reproducibility statement}

Section 4 specifies the pool generator, reward, and training objectives, while Appendix A.1 lists the formula vocabulary and Appendix B.2 gives the complete algorithm. Section 5 describes the datasets, chronological splits, baselines, and metrics. Appendices C.1.1--C.1.4 document the model settings, ridge combiner, backtesting protocol, method-specific formula-evaluation budgets, and hyperparameters. We report mean results over five runs, and the paper links the anonymized code. Because initialization, diffusion sampling, and training are stochastic, repeated runs may select different alpha pools.

\bibliography{iclr2027/iclr2027_conference}

@article{yang2020qlib,
  title={Qlib: An ai-oriented quantitative investment platform},
  author={Yang, Xiao and Liu, Weiqing and Zhou, Dong and Bian, Jiang and Liu, Tie-Yan},
  journal={arXiv preprint arXiv:2009.11189},
  year={2020}
}

@article{gu2020empirical,
  title={Empirical asset pricing via machine learning},
  author={Gu, Shihao and Kelly, Bryan and Xiu, Dacheng},
  journal={The review of financial studies},
  volume={33},
  number={5},
  pages={2223--2273},
  year={2020},
  publisher={Oxford University Press}
}

@inproceedings{chen2026alphasage,
  title={Alphasage: Structure-aware alpha mining via gflownets for robust exploration},
  author={Chen, Binqi and Ding, Hongjun and Shen, Ning and Guo, Taian and Huang, Jinsheng and Liu, Luchen and Zhang, Ming},
  booktitle={International Conference on Learning Representations},
  volume={2026},
  pages={2955--2982},
  year={2026}
}

@inproceedings{yu2023generating,
  title={Generating synergistic formulaic alpha collections via reinforcement learning},
  author={Yu, Shuo and Xue, Hongyan and Ao, Xiang and Pan, Feiyang and He, Jia and Tu, Dandan and He, Qing},
  booktitle={Proceedings of the 29th ACM SIGKDD conference on knowledge discovery and data mining},
  pages={5476--5486},
  year={2023}
}

@inproceedings{zhu2025alphaqcm,
  title={Alphaqcm: Alpha discovery in finance with distributional reinforcement learning},
  author={Zhu, Zhoufan and Zhu, Ke},
  booktitle={Forty-second International Conference on Machine Learning},
  year={2025}
}

@inproceedings{shi2025alphaforge,
  title={Alphaforge: A framework to mine and dynamically combine formulaic alpha factors},
  author={Shi, Hao and Song, Weili and Zhang, Xinting and Shi, Jiahe and Luo, Cuicui and Ao, Xiang and Arian, Hamid and Seco, Luis Angel},
  booktitle={Proceedings of the AAAI conference on artificial intelligence},
  volume={39},
  number={12},
  pages={12524--12532},
  year={2025}
}

@inproceedings{tang2025alphaagent,
  title={Alphaagent: Llm-driven alpha mining with regularized exploration to counteract alpha decay},
  author={Tang, Ziyi and Chen, Zechuan and Yang, Jiarui and Mai, Jiayao and Zheng, Yongsen and Wang, Keze and Chen, Jinrui and Lin, Liang},
  booktitle={Proceedings of the 31st ACM SIGKDD Conference on Knowledge Discovery and Data Mining V. 2},
  pages={2813--2822},
  year={2025}
}

@article{zhang2020autoalpha,
  title={Autoalpha: an efficient hierarchical evolutionary algorithm for mining alpha factors in quantitative investment},
  author={Zhang, Tianping and Li, Yuanqi and Jin, Yifei and Li, Jian},
  journal={arXiv preprint arXiv:2002.08245},
  year={2020}
}

@article{austin2021structured,
  title={Structured denoising diffusion models in discrete state-spaces},
  author={Austin, Jacob and Johnson, Daniel D and Ho, Jonathan and Tarlow, Daniel and Van Den Berg, Rianne},
  journal={Advances in neural information processing systems},
  volume={34},
  pages={17981--17993},
  year={2021}
}

@article{sahoo2024simple,
  title={Simple and effective masked diffusion language models},
  author={Sahoo, Subham S and Arriola, Marianne and Schiff, Yair and Gokaslan, Aaron and Marroquin, Edgar and Chiu, Justin T and Rush, Alexander and Kuleshov, Volodymyr},
  journal={Advances in Neural Information Processing Systems},
  volume={37},
  pages={130136--130184},
  year={2024}
}

@article{vignac2022digress,
  title={Digress: Discrete denoising diffusion for graph generation},
  author={Vignac, Clement and Krawczuk, Igor and Siraudin, Antoine and Wang, Bohan and Cevher, Volkan and Frossard, Pascal},
  journal={arXiv preprint arXiv:2209.14734},
  year={2022}
}

@article{rafailov2023direct,
  title={Direct preference optimization: Your language model is secretly a reward model},
  author={Rafailov, Rafael and Sharma, Archit and Mitchell, Eric and Manning, Christopher D and Ermon, Stefano and Finn, Chelsea},
  journal={Advances in neural information processing systems},
  volume={36},
  pages={53728--53741},
  year={2023}
}

@inproceedings{wallace2024diffusion,
  title={Diffusion model alignment using direct preference optimization},
  author={Wallace, Bram and Dang, Meihua and Rafailov, Rafael and Zhou, Linqi and Lou, Aaron and Purushwalkam, Senthil and Ermon, Stefano and Xiong, Caiming and Joty, Shafiq and Naik, Nikhil},
  booktitle={2024 IEEE/CVF Conference on Computer Vision and Pattern Recognition (CVPR)},
  pages={8228--8238},
  year={2024},
  organization={IEEE}
}

@article{borso2025preference,
  title={Preference-based alignment of discrete diffusion models},
  author={Borso, Umberto and Paglieri, Davide and Wells, Jude and Rockt{\"a}schel, Tim},
  journal={arXiv preprint arXiv:2503.08295},
  year={2025}
}

@article{lou2023discrete,
  title={Discrete diffusion modeling by estimating the ratios of the data distribution},
  author={Lou, Aaron and Meng, Chenlin and Ermon, Stefano},
  journal={arXiv preprint arXiv:2310.16834},
  year={2023}
}

@article{nie2026large,
  title={Large language diffusion models},
  author={Nie, Shen and Zhu, Fengqi and You, Zebin and Zhang, Xiaolu and Ou, Jingyang and Hu, Jun and Zhou, Jun and Lin, Yankai and Wen, Ji-Rong and Li, Chongxuan},
  journal={Advances in Neural Information Processing Systems},
  volume={38},
  pages={50608--50646},
  year={2026}
}

@inproceedings{kapur2025diffusion,
  title={Diffusion on syntax trees for program synthesis},
  author={Kapur, Shreyas and Jenner, Erik and Russell, Stuart},
  booktitle={International Conference on Learning Representations},
  volume={2025},
  pages={41414--41432},
  year={2025}
}

@inproceedings{black2024training,
  title={Training diffusion models with reinforcement learning},
  author={Black, Kevin and Janner, Michael and Du, Yilun and Kostrikov, Ilya and Levine, Sergey},
  booktitle={International Conference on Learning Representations},
  volume={2024},
  pages={4965--4987},
  year={2024}
}

@inproceedings{wang2025fine,
  title={Fine-tuning discrete diffusion models via reward optimization with applications to dna and protein design},
  author={Wang, Chenyu and Uehara, Masatoshi and He, Yichun and Wang, Amy and Lal, Avantika and Jaakkola, Tommi and Levine, Sergey and Regev, Aviv and Wang, Hanchen and Biancalani, Tommaso},
  booktitle={International Conference on Learning Representations},
  volume={2025},
  pages={47871--47899},
  year={2025}
}

@inproceedings{zhu2026llada,
  title={Llada 1.5: Variance-reduced preference optimization for large language diffusion models},
  author={Zhu, Fengqi and Wang, Rongzhen and Nie, Shen and Zhang, Xiaolu and Wu, Chunwei and Zhou, Jun and Lin, Yankai and Wen, Ji-Rong and Li, Chongxuan},
  booktitle={Proceedings of the 64th Annual Meeting of the Association for Computational Linguistics (Volume 1: Long Papers)},
  pages={11425--11460},
  year={2026}
}

@article{kakushadze2016101,
  title={101 formulaic alphas},
  author={Kakushadze, Zura},
  journal={Wilmott},
  volume={2016},
  number={84},
  pages={72--81},
  year={2016},
  publisher={Wiley Online Library}
}

@inproceedings{cui2021alphaevolve,
  title={Alphaevolve: A learning framework to discover novel alphas in quantitative investment},
  author={Cui, Can and Wang, Wei and Zhang, Meihui and Chen, Gang and Luo, Zhaojing and Ooi, Beng Chin},
  booktitle={Proceedings of the 2021 International conference on management of data},
  pages={2208--2216},
  year={2021}
}

@article{koza1994genetic,
  title={Genetic programming as a means for programming computers by natural selection},
  author={Koza, John R},
  journal={Statistics and computing},
  volume={4},
  number={2},
  pages={87--112},
  year={1994},
  publisher={Springer}
}

@article{zhao2025quantfactor,
  title={QuantFactor REINFORCE: mining steady formulaic alpha factors with variance-bounded REINFORCE},
  author={Zhao, Junjie and Zhang, Chengxi and Qin, Min and Yang, Peng},
  journal={IEEE Transactions on Signal Processing},
  volume={73},
  pages={2448--2463},
  year={2025},
  publisher={IEEE}
}

@article{shi2025navigating,
  title={Navigating the alpha jungle: an LLM-Powered MCTS framework for formulaic factor mining},
  author={Shi, Yu and Duan, Yitong and Li, Jian},
  journal={arXiv preprint arXiv:2505.11122},
  year={2025}
}

@article{murtagh1991multilayer,
  title={Multilayer perceptrons for classification and regression},
  author={Murtagh, Fionn},
  journal={Neurocomputing},
  volume={2},
  number={5-6},
  pages={183--197},
  year={1991},
  publisher={Elsevier}
}

@inproceedings{chen2016xgboost,
  title={Xgboost: A scalable tree boosting system},
  author={Chen, Tianqi and Guestrin, Carlos},
  booktitle={Proceedings of the 22nd acm sigkdd international conference on knowledge discovery and data mining},
  pages={785--794},
  year={2016}
}

@article{fama1993common,
  title={Common risk factors in the returns on stocks and bonds},
  author={Fama, Eugene F and French, Kenneth R},
  journal={Journal of financial economics},
  volume={33},
  number={1},
  pages={3--56},
  year={1993},
  publisher={Elsevier}
}

@article{bengio2021flow,
  title={Flow network based generative models for non-iterative diverse candidate generation},
  author={Bengio, Emmanuel and Jain, Moksh and Korablyov, Maksym and Precup, Doina and Bengio, Yoshua},
  journal={Advances in neural information processing systems},
  volume={34},
  pages={27381--27394},
  year={2021}
}

@inproceedings{kusner2017grammar,
  title={Grammar variational autoencoder},
  author={Kusner, Matt J and Paige, Brooks and Hern{\'a}ndez-Lobato, Jos{\'e} Miguel},
  booktitle={International conference on machine learning},
  pages={1945--1954},
  year={2017},
  organization={PMLR}
}

@article{hoerl1970ridge,
  title={Ridge regression: Biased estimation for nonorthogonal problems},
  author={Hoerl, Arthur E and Kennard, Robert W},
  journal={Technometrics},
  volume={12},
  number={1},
  pages={55--67},
  year={1970},
  publisher={Taylor \& Francis}
}

@inproceedings{chen2021empirical,
  title={An empirical study of financial factor mining based on gene expression programming},
  author={Chen, Tianxiang and Chen, Wei and Du, Luyao},
  booktitle={2021 4th International Conference on Advanced Electronic Materials, Computers and Software Engineering (AEMCSE)},
  pages={1113--1117},
  year={2021},
  organization={IEEE}
}
\bibliographystyle{iclr2027/iclr2027_conference}

\clearpage
\appendix

\setcounter{secnumdepth}{3}
\section{Notation}
Table~\ref{tab:notation} summarizes the principal notation used throughout the paper, and Appendix~\ref{app:features_ops} lists the formula vocabulary.

\begin{table}[h]
\centering
\caption{Summary of the principal notation used in \shortname{}.}
\label{tab:notation}
\footnotesize
\renewcommand{\arraystretch}{1.08}
\begin{tabularx}{\linewidth}{>{\raggedright\arraybackslash}p{3.3cm}X}
\toprule
\textbf{Symbol} & \textbf{Definition} \\
\midrule
\multicolumn{2}{l}{\textit{Data, pools, and evaluation}} \\
$X_d$, $y_d$ & Market panel and forward returns on date $d$ \\
$\mathcal X$, $f$ & Formula space defined by the vocabulary $\mathcal V$ and a formula in it \\
$F=[f_1,\ldots,f_K]$, $K$ & Alpha pool and pool size \\
$\mathcal M_K(\mathcal X)$ & Size-$K$ pools of canonically distinct formulas \\
$\widehat w_F$, $\gamma$ & Ridge combination weights and penalty (Eq.~\ref{eq:ridge}) \\
$U_S(F)$ & Utility of pool $F$ on split $S$: mean daily Spearman correlation of the combined prediction \\
$\mathcal A$, $\mathcal E$, $B$ & Submitted canonical formulas, eligible formulas, and budget on $|\mathcal A|$ \\
\midrule
\multicolumn{2}{l}{\textit{Hierarchical diffusion}} \\
$T$, $t$, $\rho_t$ & Number of diffusion levels, level index, and masking schedule \\
$U_u$, $\Phi_\rho$ & Clock of absorbable node $u$ and subtree-masking operator \\
$F_t$, $\mathbf m$, $H_t$ & Pool corrupted to level $t$, mask token, and exposed masks \\
$a_t$, $w_t$ & Mask-retention and reveal probabilities \\
$\mu_{t-1}^{u,F_0}$ & Level-$(t-1)$ distribution of the subtree at $u$ \\
$J_t$, $u_{t,r}$, $G^{(r)}$ & Reveal set, its $r$-th position, and the updated pool \\
$\psi$, $\mathbf c_{G,u,t}$, $E_t$ & Formula encoder, pool context, and level embedding \\
$r_\theta$, $\theta_{\mathrm{ref}}$ & Reverse model and frozen reference model \\
$\alpha$, $t_0$ & Edit strength and starting level of refinement \\
\midrule
\multicolumn{2}{l}{\textit{Reward and buffer}} \\
$c_{ij}$, $d_i$, $\delta$ & Pairwise redundancy, nearest-member redundancy, and tolerated similarity \\
$R_{\mathrm{outer}}$, $R_{\mathrm{inner}}$, $R$ & Outer predictive, inner diversity, and pool rewards \\
$w_{\mathrm{inner}}$ & Weight of the inner diversity reward \\
$\mathcal Q$, $C_{\mathcal Q}$, $M$ & Elite buffer, its capacity, and number of initial pools \\
$P_{\mathcal Q}$, $T_{\mathrm r}$ & Buffer sampling distribution and its temperature \\
$\mathcal C$, $\widehat F$ & Validation archive and selected final pool \\
\midrule
\multicolumn{2}{l}{\textit{Training and search}} \\
$\widehat{\mathcal L}_{\mathrm{VB}}$, $\ell_\theta$ & Reconstruction objective and single-pool loss \\
$\mathcal L_{\mathrm{pref}}$, $\mathcal P_{\mathcal Q}$ & Preference objective and distribution of accepted pairs \\
$\beta$, $m_{\mathrm{dpo}}$ & Preference strength and reward margin \\
$\lambda_{\mathrm{VB}}$ & Weight of the reconstruction objective \\
$N_{\mathrm c}$, $N$ & Proposal batch size and training minibatch size \\
$S_0$, $S_{\mathrm u}$, $N_{\mathrm{stall}}$ & Initialization updates, updates per round, and stall limit \\
\bottomrule
\end{tabularx}
\end{table}

\subsection{Features and Operators}
\label{app:features_ops}
Table~\ref{tab:dsl_vocabulary} lists the features, operators, windows,
constants, and special tokens that define the formula space $\mathcal X$.

\begin{table}[h]
\centering
\caption{Formula vocabulary. The 61 semantic tokens in $\mathcal V$ define the formula space $\mathcal X$, and the four special tokens control generation and diffusion. In the stack effect, $w$ denotes a window token.}
\label{tab:dsl_vocabulary}
\small
\renewcommand{\arraystretch}{1.12}
\begin{tabularx}{\linewidth}{@{}l>{\raggedright\arraybackslash}Xl@{}}
\toprule
\textbf{Type (count)} & \textbf{Tokens} & \textbf{Stack effect} \\
\midrule
Feature (6) & \texttt{Open}, \texttt{Close}, \texttt{High}, \texttt{Low}, \texttt{Vwap}, \texttt{Volume} & push \\
Constant (14) & $\pm30$, $\pm10$, $\pm5$, $\pm2$, $\pm1$, $\pm0.5$, $\pm0.01$ & push \\
Window (5) & $1$, $5$, $10$, $20$, $40$ & push window \\
Unary (6) & \texttt{Abs}, \texttt{Slog1p}, \texttt{Inv}, \texttt{Sign}, \texttt{Log}, \texttt{Rank} & $1\to1$ \\
Binary (7) & \texttt{Add}, \texttt{Sub}, \texttt{Mul}, \texttt{Div}, \texttt{Pow}, \texttt{Greater}, \texttt{Less} & $2\to1$ \\
Temporal (21) & \texttt{Ref}, \texttt{TsMean}, \texttt{TsSum}, \texttt{TsStd}, \texttt{TsIr}, \texttt{TsMinMaxDiff}, \texttt{TsMaxDiff}, \texttt{TsMinDiff}, \texttt{TsVar}, \texttt{TsSkew}, \texttt{TsKurt}, \texttt{TsMax}, \texttt{TsMin}, \texttt{TsMed}, \texttt{TsMad}, \texttt{TsRank}, \texttt{TsDelta}, \texttt{TsDiv}, \texttt{TsPctChange}, \texttt{TsWMA}, \texttt{TsEMA} & $1{+}w\to1$ \\
Pairwise temporal (2) & \texttt{TsCov}, \texttt{TsCorr} & $2{+}w\to1$ \\
\midrule
Special (4) & \textsc{beg}, \textsc{sep}, \textsc{mask}, \textsc{fill} & control \\
\bottomrule
\end{tabularx}
\end{table}

\section{Supplementary Model Details}
\label{app:model_supplement}
This section provides the theoretical analysis of \shortname{}
(Appendix~\ref{app:theoretical_analysis}), its complete search and
training procedure (Appendix~\ref{app:algorithm}), and its computational
complexity (Appendix~\ref{app:complexity}).

\subsection{Theoretical Derivations}
\label{app:theoretical_analysis}
We analyze the probabilistic structure of the two learned components of \shortname{}. Appendix~\ref{app:theory_diffusion} derives the posterior of the hierarchical corruption process and the reverse transition it induces, corresponding to Section~\ref{sec:pool_diffusion}. Appendix~\ref{app:theory_objective} relates the reconstruction and preference objectives of Section~\ref{sec:reward_guide} to a variational bound and to a KL-regularized pool-level reward maximization, and identifies the approximations involved.

For a clean pool $F_0$, let $\mathcal N(F_0)$ be the set of absorbable AST nodes of its $K$ formulas. Window tokens are attached to their temporal operators and are not absorbable. For $u\in\mathcal N(F_0)$, $\operatorname{Sub}(u)$ contains $u$ and its absorbable descendants, and $\operatorname{Desc}(u)=\operatorname{Sub}(u)\setminus\{u\}$. For an antichain $H\subseteq\mathcal N(F_0)$, in which no node is an ancestor of another, $\Phi_H(F_0)$ replaces each $\operatorname{Sub}(u)$ with $u\in H$ by $\mathbf m$. With the clocks $\{U_u\}$ of Section~\ref{sec:pool_diffusion}, the visible, active, and exposed sets and the schedule quantities are
\begin{equation}
\begin{aligned}
V(H)&=\mathcal N(F_0)\setminus\textstyle\bigcup_{u\in H}\operatorname{Sub}(u),
&\qquad
A_\rho&=\{u\in\mathcal N(F_0):U_u\le\rho\},\\
H_\rho&=\{u\in A_\rho:\text{no ancestor of $u$ lies in $A_\rho$}\},
&\qquad
\Phi_\rho(F_0)&=\Phi_{H_\rho}(F_0),\\
\eta_t&=\frac{\rho_t-\rho_{t-1}}{1-\rho_{t-1}},
&\qquad
a_t&=\frac{\rho_{t-1}}{\rho_t},\quad w_t=\frac{\rho_t-\rho_{t-1}}{\rho_t},
\end{aligned}
\label{eq:app_setup}
\end{equation}
where $0=\rho_0<\rho_1<\cdots<\rho_T=1$ and $H_t:=H_{\rho_t}$ collects the exposed masks of $F_t=\Phi_{\rho_t}(F_0)$. A corrupted pool is treated as a partially masked AST, so $H_t$ is recoverable from $F_t$. For the $r$-th reconstruction of a transition, we abbreviate $r_\vartheta^{(r)}=r_\vartheta(Y_{u_r}\mid G^{(r-1)},u_r,t)$ for $\vartheta\in\{\theta,\theta_{\mathrm{ref}}\}$. We adopt two conventions. (A1)~$r_\vartheta$ is renormalized over grammar-valid actions, assigns positive probability to every valid target, including targets with \textsc{mask} at descendant positions, and never outputs the bare target $\mathbf m$ at the queried position. (A2)~The order $\pi$ is drawn from a parameter-free distribution $\nu(\pi\mid J)$ over the set $\mathcal O(J)$ of interleavings of $J$ that preserve the RPN order within each formula. Canonicalization, duplicate replacement, and executability filtering follow reverse generation and lie outside the analyzed model.

\subsubsection{Pool-Aware Hierarchical Diffusion}
\label{app:theory_diffusion}

\begin{proposition}[Hierarchical corruption posterior]
\label{prop:app_posterior}
For every $t\in\{0,\ldots,T\}$ and every antichain $H\subseteq\mathcal N(F_0)$,
\begin{equation}
q\bigl(F_t=\Phi_H(F_0)\mid F_0\bigr)=\rho_t^{|H|}\,(1-\rho_t)^{|V(H)|}.
\label{eq:app_marginal}
\end{equation}
For $t\ge1$, let $J_t=H_t\setminus H_{t-1}$ and $S_t=H_t\cap H_{t-1}$ be the revealed and retained masks. For $u\in J_t$, let $\mathcal H^u=H_{t-1}\cap\operatorname{Desc}(u)$ and $V^u=\operatorname{Sub}(u)\setminus\bigcup_{v\in\mathcal H^u}\operatorname{Sub}(v)$. The posterior factorizes over the exposed masks as
\begin{equation}
q(F_{t-1}\mid F_t,F_0)
=\prod_{u\in H_t}\Bigl[a_t\,\delta_{\mathbf m}(Y_u)+w_t\,\mu_{t-1}^{u,F_0}(Y_u)\Bigr],
\label{eq:app_posterior}
\end{equation}
where $\mu_{t-1}^{u,F_0}(Y_u)=\rho_{t-1}^{|\mathcal H^u|}(1-\rho_{t-1})^{|V^u|-1}$.
\end{proposition}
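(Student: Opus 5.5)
The plan is to work directly with the i.i.d. uniform clocks $\{U_u\}_{u\in\mathcal N(F_0)}$ and translate each event about corrupted pools into an event about clocks. Since a corrupted pool is a partially masked AST, $H_t$ is recoverable from $F_t$. Hence $F_t=\Phi_H(F_0)$ is equivalent to $H_{\rho_t}=H$.

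\textbf{Marginal.} First I would characterize the event $H_\rho=H$ for an antichain $H$. It holds if and only if two conditions are met: every $u\in H$ has $U_u\le\rho$, and every node that is neither in $H$ nor a descendant of a node of $H$ has $U_v>\rho$.
\begin{itemize}
\item Necessity: a visible node $v\in V(H)$ with $U_v\le\rho$ would be active. Then either $v$ or its highest active ancestor would be exposed, and that node is not in $H$ (it is not a descendant of $H$ either, since $v\in V(H)$), so $H_\rho\neq H$.
\item Sufficiency: if the two conditions hold, every node of $H$ has no active ancestor. This is because its ancestors are visible: they are not in $H$ by the antichain property, and not descendants of $H$.
\end{itemize}
Clocks of nodes in $\bigcup_{u\in H}\operatorname{Desc}(u)$ are unconstrained. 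By independence, the probability of the event is $\rho^{|H|}(1-\rho)^{|V(H)|}$. This gives \eqref{eq:app_marginal} at $\rho=\rho_t$.

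\textbf{Posterior.} Because $\rho_{t-1}<\rho_t$, we have $A_{\rho_{t-1}}\subseteq A_{\rho_t}$. I would show that $H_{t-1}\subseteq\bigcup_{u\in H_t}\operatorname{Sub}(u)$: any node active at $\rho_{t-1}$ is active at $\rho_t$, so it lies below some exposed node of $H_t$. Hence $F_{t-1}$ is determined by the restriction of $H_{t-1}$ to each subtree $\operatorname{Sub}(u)$ with $u\in H_t$. Then I would compute the joint probability of the event $\{H_t=H,\ H_{t-1}=H'\}$, where $H'$ refines $H$ in this way. The joint event factorizes over disjoint subtree blocks, each governed by its own independent clocks:
\begin{itemize}
\item Visible nodes of $H$ contribute a factor $1-\rho_t$ each, since they need $U>\rho_t$, which implies $U>\rho_{t-1}$.
\item For $u\in H$ that stays masked ($u\in S_t$), we need $U_u\le\rho_{t-1}$. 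This gives $\rho_{t-1}$, and the descendants are free.
\item For $u\in J_t$, we need $\rho_{t-1}<U_u\le\rho_t$, giving $\rho_t-\rho_{t-1}$. In addition, the nodes $v\in\mathcal H^u$ need $U_v\le\rho_{t-1}$, and the nodes of $V^u\setminus\{u\}$ need $U>\rho_{t-1}$. The latter contributes $\rho_{t-1}^{|\mathcal H^u|}(1-\rho_{t-1})^{|V^u|-1}$, with $|V^u|-1$ because $u$ itself is counted separately.
\end{itemize}
Dividing by the marginal $\rho_t^{|H|}(1-\rho_t)^{|V(H)|}$ cancels the visible factors. What remains is a product over $u\in H_t$ of $\rho_{t-1}/\rho_t=a_t$ for retained masks, and of $\frac{\rho_t-\rho_{t-1}}{\rho_t}\mu_{t-1}^{u,F_0}=w_t\mu_{t-1}^{u,F_0}$ for revealed masks. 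Writing each factor as $a_t\delta_{\mathbf m}(Y_u)+w_t\mu_{t-1}^{u,F_0}(Y_u)$ gives \eqref{eq:app_posterior}. This uses that $Y_u=\mathbf m$ exactly when $u$ is retained, while a revealed state has $u$ visible and is therefore never $\mathbf m$. I would also check that $\mu_{t-1}^{u,F_0}$ sums to one over the antichains of $\operatorname{Desc}(u)$; this is the marginal formula applied to the subtree forest below $u$.

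\textbf{Main obstacle.} I expect the hardest step to be the clean clock characterization of $H_\rho=H$, in particular handling the unconstrained descendant clocks and the ancestor-precedence rule. I also need to verify that the nested events at $\rho_{t-1}$ and $\rho_t$ decompose into independent per-block conditions. I would handle this by partitioning $\mathcal N(F_0)$ into the blocks $V(H_t)$ and $\operatorname{Sub}(u)$ for $u\in H_t$, and checking that each condition involves only one block's clocks. The edge case $t=1$ ($\rho_0=0$, so $a_1=0$, and every mask is revealed with $\mathcal H^u=\varnothing$) then follows from the same formula.
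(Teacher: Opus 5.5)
Your proof is correct. The marginal part matches the paper's argument: the exposed set equals $H$ iff $H$ is active and $V(H)$ is inactive, with clocks strictly below $H$ unconstrained.

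For the posterior you take a somewhat different route. The paper first shows that the shared-clock corruption is Markov. Conditional on $F_{t-1}$, the clocks on $V(H_{t-1})$ are i.i.d.\ uniform on $(\rho_{t-1},1]$, which gives the forward kernel $q(F_t\mid F_{t-1})=\eta_t^{|J_t|}(1-\eta_t)^{|V(H_t)|}$. It then applies Bayes' rule with this kernel and the marginal, using $V(H_{t-1})=V(H_t)\sqcup\bigsqcup_{u\in J_t}V^u$ and the identities $(1-\eta_t)(1-\rho_{t-1})=1-\rho_t$ and $\eta_t(1-\rho_{t-1})=w_t\rho_t$.

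You instead read off the joint probability $q(H_t=H,\,H_{t-1}=H'\mid F_0)$ directly, as a product of interval probabilities of independent clocks over the partition $V(H_t)$, $\{\operatorname{Sub}(u)\}_{u\in H_t}$. Your numerator is exactly the paper's $q(F_t\mid F_{t-1})\,q(F_{t-1}\mid F_0)$ after those identities, obtained without introducing $\eta_t$.

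\textbf{What your route buys.} It is more elementary and self-contained for the stated identity. Your argument that the two terms have disjoint supports is also structural: a revealed $u$ is visible at level $t-1$, so $Y_u\neq\mathbf m$. The paper instead appeals to (A1), which is really a condition on $r_\vartheta$.

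\textbf{What the paper's route buys.} The detour through the forward kernel establishes that the forward process is Markov. The statement does not claim this, but it is what later licenses the path factorization $q(F_T\mid F_0)\prod_t q(F_{t-1}\mid F_t,F_0)$ in Proposition~\ref{prop:app_elbo}. With your version, that conditional-independence fact would have to be proved separately.

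\textbf{Minor points to tighten.}
\begin{itemize}
\item In the sufficiency direction, also note that no node outside $H$ is exposed: nodes of $V(H)$ are inactive, and nodes of $\operatorname{Desc}(u)$ have the active ancestor $u$.
\item In the joint computation, state explicitly that the event is the conjunction of the two single-level characterizations. This is what makes every constraint a per-node clock condition confined to one block.
\end{itemize}
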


\begin{proof}
The exposed set equals $H$ if and only if every node of $H$ is active and every node of $V(H)$ is inactive. Under these conditions, each active node lies below some $u\in H$ whose ancestors are all inactive. Conversely, an active node in $V(H)$ would have its highest active ancestor in $H_\rho=H$, which contradicts its membership in $V(H)$. Clocks strictly below $H$ do not enter, and the independence of the clocks gives Eq.~\ref{eq:app_marginal}. In particular, $\rho_T=1$ forces $F_T=(\mathbf m,\ldots,\mathbf m)$. Because the clocks are shared and $\rho_t$ increases, $A_{\rho_{t-1}}\subseteq A_{\rho_t}$. Given $F_{t-1}$, the clocks on $V(H_{t-1})$ are uniform on $(\rho_{t-1},1]$, so each visible node is newly masked with probability $\eta_t$, independently, while nodes inside masked subtrees cannot affect $H_t$. The forward process is therefore Markov with $q(F_t\mid F_{t-1})=\eta_t^{|J_t|}(1-\eta_t)^{|V(H_t)|}$. Each element of $H_{t-1}$ lies either in $S_t$ or below exactly one $u\in J_t$, which gives
\begin{equation}
V(H_{t-1})=V(H_t)\sqcup\textstyle\bigsqcup_{u\in J_t}V^u,
\qquad
|H_{t-1}|=|S_t|+\textstyle\sum_{u\in J_t}|\mathcal H^u|,
\qquad
|H_t|=|S_t|+|J_t|.
\label{eq:app_decomposition}
\end{equation}
Applying Bayes' rule with Eqs.~\ref{eq:app_marginal} and~\ref{eq:app_decomposition}, together with $(1-\eta_t)(1-\rho_{t-1})=1-\rho_t$ and $\eta_t(1-\rho_{t-1})=w_t\rho_t$, yields
\begin{equation}
\begin{aligned}
q(F_{t-1}\mid F_t,F_0)
&=\frac{q(F_t\mid F_{t-1})\,q(F_{t-1}\mid F_0)}{q(F_t\mid F_0)}
=\frac{\eta_t^{|J_t|}(1-\eta_t)^{|V(H_t)|}\,\rho_{t-1}^{|H_{t-1}|}(1-\rho_{t-1})^{|V(H_{t-1})|}}
{\rho_t^{|H_t|}(1-\rho_t)^{|V(H_t)|}}\\
&=\Bigl(\frac{\rho_{t-1}}{\rho_t}\Bigr)^{|S_t|}
\prod_{u\in J_t}\frac{\eta_t(1-\rho_{t-1})}{\rho_t}\,
\rho_{t-1}^{|\mathcal H^u|}(1-\rho_{t-1})^{|V^u|-1}\\
&=a_t^{|S_t|}\prod_{u\in J_t}w_t\,\mu_{t-1}^{u,F_0}(Y_u),
\end{aligned}
\label{eq:app_bayes}
\end{equation}
where the factors on $V(H_t)$ cancel and one factor $1-\rho_{t-1}$ from each $V^u$ is absorbed into $w_t$. The distribution $\mu_{t-1}^{u,F_0}$ is Eq.~\ref{eq:app_marginal} applied to the forest $\operatorname{Desc}(u)$ at level $t-1$ and is therefore normalized. By (A1), it never produces the bare $\mathbf m$, so the two terms in Eq.~\ref{eq:app_posterior} have disjoint supports, and expanding the product recovers Eq.~\ref{eq:app_bayes}.
\end{proof}

\begin{remark}
At $t=1$, we have $a_1=0$, $w_1=1$, and $\mu_0^{u,F_0}$ is the point mass on the clean subtree of $u$, so the last reverse transition deterministically completes every remaining mask. Drawing one set of clocks and reading off $(\Phi_{\rho_t}(F_0),\Phi_{\rho_{t-1}}(F_0))$ yields an exact sample of $q(F_t,F_{t-1}\mid F_0)$, in which each $u\in H_t$ enters $J_t$ independently with probability $w_t$ and independently of $F_0$. The reverse process in Section~\ref{sec:pool_diffusion} uses the same Bernoulli selection, and the pairs $(J_b,Y_{b,r})$ in Eq.~\ref{eq:variational_diffusion} are sampled in this way. The hierarchy enters the posterior only through $\mu_{t-1}^{u,F_0}$: a revealed position receives a complete subtree whose descendants remain corrupted at level $t-1$, so a single transition can edit a leaf, a subexpression, or a whole formula.
\end{remark}

\begin{proposition}[Pool-aware reverse transition]
\label{prop:app_reverse}
Let $q(F_0)$ be the distribution of clean pools, such as $P_{\mathcal Q}$, and let $h_r=(F_t,u_{1:r},Y_{u_1},\ldots,Y_{u_{r-1}})$ denote the reconstruction history. For every order $\pi=(u_1,\ldots,u_{|J_t|})\in\mathcal O(J_t)$, the true reverse transition and the reverse model in Section~\ref{sec:pool_diffusion} are
\begin{equation}
\begin{aligned}
q(F_{t-1}\mid F_t)&=a_t^{|S_t|}\,w_t^{|J_t|}\prod_{r=1}^{|J_t|}\mathsf Q_t(Y_{u_r}\mid h_r),\\
p_\theta(F_{t-1}\mid F_t)&=a_t^{|S_t|}\,w_t^{|J_t|}\sum_{\pi\in\mathcal O(J_t)}\nu(\pi\mid J_t)\prod_{r=1}^{|J_t|}r_\theta^{(r)},
\end{aligned}
\label{eq:app_reverse}
\end{equation}
where $\mathsf Q_t(Y_{u_r}\mid h_r)=\mathbb E_{F_0\sim q(\cdot\mid F_t,Y_{u_{<r}})}\bigl[\mu_{t-1}^{u_r,F_0}(Y_{u_r})\bigr]$ is the posterior predictive of the $r$-th target. Let $\mathsf Q_t(\cdot\mid F_t)$ and $\mathsf P_\theta(\cdot\mid F_t,J_t)$ denote the content distributions over $Y_{J_t}$ in the two lines. The transition divergence then reduces to the reconstructed content,
\begin{equation}
\operatorname{KL}\bigl(q(F_{t-1}\mid F_t)\,\big\|\,p_\theta(F_{t-1}\mid F_t)\bigr)
=\mathbb E_{J_t}\Bigl[\operatorname{KL}\bigl(\mathsf Q_t(\cdot\mid F_t)\,\big\|\,\mathsf P_\theta(\cdot\mid F_t,J_t)\bigr)\Bigr],
\label{eq:app_reverse_kl}
\end{equation}
where each $u\in H_t$ enters $J_t$ independently with probability $w_t$.
\end{proposition}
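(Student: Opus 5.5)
The plan is to derive both lines of Eq.~\ref{eq:app_reverse} from Proposition~\ref{prop:app_posterior} by marginalizing over $F_0$ and expanding the reveal/content structure.

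\textbf{True reverse transition.} I would start from $q(F_{t-1}\mid F_t)=\mathbb E_{F_0\sim q(\cdot\mid F_t)}[q(F_{t-1}\mid F_t,F_0)]$. By Eq.~\ref{eq:app_posterior}, the factor for each exposed mask is either $a_t$ on $\delta_{\mathbf m}$ or $w_t\mu_{t-1}^{u,F_0}$. These branches have disjoint supports by (A1), so a given $F_{t-1}$ determines $S_t$ and $J_t$. Hence
\[
q(F_{t-1}\mid F_t,F_0)=a_t^{|S_t|}w_t^{|J_t|}\prod_{u\in J_t}\mu_{t-1}^{u,F_0}(Y_u).
\]
Here $a_t$ and $w_t$ do not depend on $F_0$, so they leave the expectation. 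The remaining quantity $\mathbb E_{F_0\mid F_t}\prod_{u\in J_t}\mu^{u,F_0}_{t-1}(Y_u)$ is the joint predictive of $Y_{J_t}$ given $F_t$ and $J_t$. I would fix any order $\pi\in\mathcal O(J_t)$ and apply the chain rule to this joint predictive. The $r$-th factor is
\[
\frac{\mathbb E_{F_0\mid F_t}\prod_{s\le r}\mu^{u_s}}{\mathbb E_{F_0\mid F_t}\prod_{s<r}\mu^{u_s}}.
\]
By Bayes, this equals $\mathbb E_{F_0\sim q(\cdot\mid F_t,Y_{u_{<r}})}[\mu^{u_r,F_0}(Y_{u_r})]=\mathsf Q_t(Y_{u_r}\mid h_r)$. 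The posterior is conditioned on the earlier revealed contents, and conditioning on $J_t$ adds nothing because $J_t$ is independent of $F_0$ by the Remark. This shows the product is the same for every order, which gives the first line.

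\textbf{Model transition.} The model selects each $u\in H_t$ into $J_t$ independently with probability $w_t$. This yields $a_t^{|S_t|}w_t^{|J_t|}$, since a non-selected mask stays $\mathbf m$ and (A1) forbids the model from producing a bare $\mathbf m$ at a selected position, so $J_t$ is again recoverable from $F_{t-1}$. The model then draws $\pi\sim\nu(\cdot\mid J_t)$ by (A2) and samples the targets sequentially, with $G^{(r)}$ determined by $(F_t,Y_{u_{<r}})$. Marginalizing the latent order gives the mixture in the second line. I need to check that different orders produce the same $F_{t-1}$: splicing is applied at fixed positions, and descendant masks are deferred to the next level, so the final pool depends only on the content $Y_{J_t}$ and not on the order.

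\textbf{KL reduction.} Both distributions factor as $\mathrm{Ber}(J_t)\times(\text{content}\mid J_t)$ with the same Bernoulli law $\prod_{u\in H_t}w_t^{[u\in J_t]}a_t^{[u\notin J_t]}$. Applying the chain rule of KL, the divergence of the selection parts vanishes, and what remains is $\mathbb E_{J_t}\mathrm{KL}(\mathsf Q_t\|\mathsf P_\theta)$, which is Eq.~\ref{eq:app_reverse_kl}. The one subtlety is that $\mathsf Q_t(\cdot\mid F_t)$ also implicitly depends on $J_t$, through which positions are revealed.

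\textbf{Main obstacle.} I expect the order-invariance and identifiability bookkeeping to be the hard part. Three things must be shown: the map from $(J_t,Y_{J_t})$ to $F_{t-1}$ is injective, the sequential Bayes conditioning is legitimate when splices change the AST, and $q(F_0\mid F_t,Y_{u_{<r}})$ is well defined when $\mathcal N(F_0)$ varies with $F_0$. I would handle these by indexing positions through the partially masked AST of $F_t$, using the antichain and disjointness arguments of Proposition~\ref{prop:app_posterior}.
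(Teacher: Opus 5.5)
Your proposal is correct and follows the paper's proof essentially step for step. You marginalize the factorized posterior of Proposition~\ref{prop:app_posterior} over $q(F_0\mid F_t)$, pull out the $F_0$-independent selection factor $a_t^{|S_t|}w_t^{|J_t|}$, and telescope the joint predictive by Bayes for an arbitrary order; you then read the model line off the sampling procedure, with (A1) making $J_t$ identifiable, and split the KL over selections and contents. Your additional checks, namely that $F_{t-1}$ does not depend on the splice order and that $J_t\perp F_0\mid F_t$ justifies $q(F_0\mid F_t,Y_{u_{<r}})$, fill in bookkeeping the paper leaves implicit and do not change the route.
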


\begin{proof}
When Eq.~\ref{eq:app_posterior} is marginalized over $q(F_0\mid F_t)$, the selection factor does not depend on $F_0$, and the chain rule applies to the remaining mixture for any order $\pi$:
\begin{equation}
\begin{aligned}
q(F_{t-1}\mid F_t)
&=\sum_{F_0}q(F_0\mid F_t)\,q(F_{t-1}\mid F_t,F_0)\\
&=a_t^{|S_t|}\,w_t^{|J_t|}\sum_{F_0}q(F_0\mid F_t)\prod_{u\in J_t}\mu_{t-1}^{u,F_0}(Y_u)\\
&=a_t^{|S_t|}\,w_t^{|J_t|}\prod_{r=1}^{|J_t|}\sum_{F_0}q(F_0\mid F_t,Y_{u_{<r}})\,\mu_{t-1}^{u_r,F_0}(Y_{u_r}),
\end{aligned}
\label{eq:app_true_reverse}
\end{equation}
where $q(F_0\mid F_t,Y_{u_{<r}})\propto q(F_0\mid F_t)\prod_{s<r}\mu_{t-1}^{u_s,F_0}(Y_{u_s})$. The model transition follows from the reverse procedure. $J_t$ is selected with probability $w_t$ per position, $\pi\sim\nu(\cdot\mid J_t)$, and each target is drawn from $r_\theta^{(r)}$ and spliced into $G^{(r)}=\operatorname{Splice}(G^{(r-1)},u_r,Y_{u_r})$ before the next prediction, with $G^{(0)}=F_t$. By (A1), $J_t$ is identifiable from $F_{t-1}$, so both transitions assign the same probability $a_t^{|S_t|}w_t^{|J_t|}$ to the selection. Summing $q\log(q/p_\theta)$ over $F_{t-1}$, first over the contents and then over the selections, gives Eq.~\ref{eq:app_reverse_kl}.
\end{proof}

\begin{remark}
The content distribution $\mathsf Q_t(Y_{J_t}\mid F_t)=\mathbb E_{F_0\sim q(\cdot\mid F_t)}\prod_{u\in J_t}\mu_{t-1}^{u,F_0}(Y_u)$ is a mixture of products. It does not factorize across positions whenever $q(F_0\mid F_t)$ couples the pool members, as it does for buffered pools that are selected jointly by the non-separable reward in Eq.~\ref{eq:pool_reward}. A reverse model that reconstructs each position independently from $F_t$ therefore cannot represent it. The sequential model instead conditions each target on the updated pool $G^{(r-1)}$. If $G^{(r-1)}$ retains $h_r$ and $r_\theta^{(r)}=\mathsf Q_t(\cdot\mid h_r)$, every order yields the same product, and $p_\theta(F_{t-1}\mid F_t)=q(F_{t-1}\mid F_t)$ for any $\nu$. This is the formal role of pool-aware reverse generation. Candidate generation composes these kernels from the partial corruption level $t_0$,
\begin{equation}
P_\theta(F'\mid F)
=\sum_{F_{t_0}}q(F_{t_0}\mid F)\sum_{F_1,\ldots,F_{t_0-1}}\;\prod_{t=1}^{t_0}p_\theta(F_{t-1}\mid F_t)\Big|_{F_0=F'},
\label{eq:app_candidate}
\end{equation}
which is the pool-conditional generator in Eq.~\ref{eq:pool_aware_generation}.
\end{remark}

\subsubsection{Reward-Guided Pool Refinement and Training Objective}
\label{app:theory_objective}

Let $\xi=(t,\{U_u\},\pi)$ collect the diffusion level, the clocks, and the reconstruction order, and let $\Omega=(F_{1:T},\pi_{1:T})$ denote an augmented reverse path. For $\vartheta\in\{\theta,\theta_{\mathrm{ref}}\}$, the unnormalized and normalized single-pool reconstruction losses are
\begin{equation}
\Lambda_\vartheta(F_0;\xi)=-T\sum_{r=1}^{|J_t|}\log r_\vartheta^{(r)},
\qquad
\ell_\vartheta(F_0;\xi)=-\sum_{r=1}^{|J_t|}\frac{\log r_\vartheta^{(r)}}{n\,L_r},
\qquad
n=\max\{1,|J_t|\},
\label{eq:app_losses}
\end{equation}
so that $\widehat{\mathcal L}_{\mathrm{VB}}=(T/N)\sum_{b=1}^{N}\ell_\theta(F_0^{(b)};\xi_b)$ in Eq.~\ref{eq:variational_diffusion}. We write $\mathcal L_{\mathrm{VB}}(F_0;\vartheta)=\mathbb E_\xi[\Lambda_\vartheta(F_0;\xi)]$ and let $\bar L$ be the maximal target length, including \textsc{sep}.

\begin{proposition}[Variational bound and normalized surrogate]
\label{prop:app_elbo}
There exists $\mathcal H(F_0)\ge0$, independent of $\theta$, such that
\begin{equation}
-\log p_\theta(F_0)
\le\mathcal L_{\mathrm{VB}}(F_0;\theta)-\mathcal H(F_0)
\le T\bar L\,\mathbb E_\xi\bigl[n_\xi\,\ell_\theta(F_0;\xi)\bigr]-\mathcal H(F_0).
\label{eq:app_elbo}
\end{equation}
Moreover, for a fixed history $h$ with $\mathsf Q=\mathsf Q_t(\cdot\mid h)$, the minimizer of $\mathbb E_\xi[\ell_\theta]$ over unrestricted conditionals $r(\cdot\mid h)$ is the length-tilted distribution
\begin{equation}
\begin{gathered}
r^\star(Y\mid h)=\frac{\mathsf Q(Y)}{L(Y)\,Z(h)},
\qquad
Z(h)=\mathbb E_{\mathsf Q}\bigl[L(Y)^{-1}\bigr],\\
\max\bigl\{\operatorname{KL}(\mathsf Q\,\|\,r^\star),\,\operatorname{KL}(r^\star\,\|\,\mathsf Q)\bigr\}\le\log\bar L.
\end{gathered}
\label{eq:app_tilt}
\end{equation}
\end{proposition}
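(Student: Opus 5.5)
The plan has three parts: the standard diffusion ELBO specialized to the transitions of Proposition~\ref{prop:app_reverse}, then a loss-by-loss comparison of $\Lambda_\theta$ and $\ell_\theta$, then a Lagrangian computation for the tilted minimizer.

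\textbf{Variational bound.} I start from Jensen over the augmented path $\Omega$. With $q(F_{1:T}\mid F_0)$ and the parameter-free order $\nu$ as the variational distribution, this gives
\[
-\log p_\theta(F_0)\le \mathbb E_q\Bigl[-\log \tfrac{p_\theta(F_T)\prod_t p_\theta(F_{t-1},\pi_t\mid F_t)}{q(F_{1:T}\mid F_0)\prod_t\nu(\pi_t\mid J_t)}\Bigr].
\]
Here $F_T=(\mathbf m,\ldots,\mathbf m)$ is deterministic, so the prior term vanishes. Next I substitute the factorization of Proposition~\ref{prop:app_posterior} and the joint (order-augmented) model term from Eq.~\ref{eq:app_reverse}. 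The selection factors $a_t^{|S_t|}w_t^{|J_t|}$ appear in both numerator and denominator and cancel, and so do the $\nu$ factors. What remains is
\[
\sum_t\mathbb E\Bigl[-\sum_r\log r_\theta^{(r)}\Bigr]+\sum_t\mathbb E\Bigl[\sum_{u\in J_t}\log\mu_{t-1}^{u,F_0}(Y_u)\Bigr].
\]
\begin{itemize}
\item The second sum is minus a conditional entropy of the clock-induced content. It is independent of $\theta$, and it is $\le 0$ because each $\mu$ is a normalized pmf. I define $\mathcal H(F_0)\ge0$ to be its negative.
\item The first sum over $t=1,\ldots,T$ equals $T\,\mathbb E_{t\sim\mathrm{Unif}}[\cdot]$. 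It is therefore $\mathbb E_\xi[\Lambda_\theta]=\mathcal L_{\mathrm{VB}}$, using that the sampled $(J_t,Y)$ pairs are exact draws (the Remark after Proposition~\ref{prop:app_posterior}).
\end{itemize}
The step I expect to need the most care is the augmentation by $\pi$: I must verify that $\log\sum_\pi$ in $p_\theta$ can be bounded by the joint over $\pi$ under $\nu$. This is again Jensen, with $\nu$ as the proposal.

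\textbf{Second inequality.} This is termwise. For each $r$, $-\log r_\theta^{(r)}=n L_r\cdot\bigl(-\log r_\theta^{(r)}/(nL_r)\bigr)$. Since $-\log r\ge0$ and $L_r\le\bar L$, summing gives $\sum_r-\log r_\theta^{(r)}\le n\bar L\,\ell_\theta$. Multiplying by $T$ and taking $\mathbb E_\xi$ yields the bound; the $n$ depends on $\xi$, hence the notation $n_\xi$. When $|J_t|=0$ both sides are $0$.

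\textbf{Tilted minimizer.} I fix a history $h$; the expected contribution is then proportional to $-\sum_Y\mathsf Q(Y)\log r(Y)/L(Y)$. The prefactor $1/n$ and the sampling weights are positive constants given $h$, so they do not affect the argmin. Writing $\tilde q(Y)=\mathsf Q(Y)/(L(Y)Z)$, the objective equals $Z\cdot\mathrm{CE}(\tilde q,r)$. This is minimized uniquely at $r=\tilde q$ by Gibbs' inequality, giving $r^\star$.

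\textbf{KL bounds.} These follow from the density ratio $r^\star/\mathsf Q=1/(LZ)$. Because $1\le L\le\bar L$, we have $Z\in[\bar L^{-1},1]$ and hence $LZ\in[\bar L^{-1},\bar L]$. Therefore $|\log(r^\star/\mathsf Q)|\le\log\bar L$ pointwise, and each KL, being an expectation of $\pm\log$ of this ratio, is at most $\log\bar L$.

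The one subtlety in the minimizer is that $\ell$ normalizes each pool by $n$, which varies with $\xi$. Because I condition on a fixed $h$, this is only a positive weight and does not move the argmin.
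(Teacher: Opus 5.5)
Apart from one step, your proposal matches the paper's proof. The shared steps are:
\begin{itemize}
\item a single Jensen bound on the augmented path $\Omega$, which already settles your concern about $\log\sum_\pi$, since $\pi_{1:T}$ is a component of $\Omega$;
\item cancellation of the selection factors $a_t^{|S_t|}w_t^{|J_t|}$ and of $\nu$, with $\mathcal H(F_0)\ge0$ from $\mu_{t-1}^{u,F_0}\le1$;
\item the termwise bound $\Lambda_\theta\le T\bar L\,n_\xi\,\ell_\theta$;
\item Gibbs' inequality for the length-tilted cross-entropy, and the pointwise bound $|\log(r^\star/\mathsf Q)|\le\log\bar L$.
\end{itemize}

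The step that does not hold as written is your claim that $1/n$ is a positive constant given $h$. The history $h_r=(F_t,u_{1:r},Y_{u_1},\ldots,Y_{u_{r-1}})$ records only the first $r$ revealed positions. The rest of $J_t$, and hence $n=\max\{1,|J_t|\}$, is still random given $h$.

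Fixing $h$ alone therefore does not neutralize the weight. In general the per-history objective is $\mathbb E_{\mathsf Q}\bigl[\mathbb E[1/n\mid h,Y]\,(-\log r(Y\mid h))/L(Y)\bigr]$. Its minimizer is proportional to $\mathsf Q(Y)\,\mathbb E[1/n\mid h,Y]/L(Y)$, not to $\mathsf Q(Y)/L(Y)$.

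What you need, and what the paper uses, is that $J_t$ is conditionally independent of $Y_{u_r}$ given $h$. Given $F_t$, the joint law is $q(F_0\mid F_t)\Pr(J_t)\,\nu(\pi\mid J_t)\prod_{u\in J_t}\mu_{t-1}^{u,F_0}(Y_u)$. Here the Bernoulli selection and the order do not depend on $F_0$ (Remark after Proposition~\ref{prop:app_posterior}). Summing out $Y_{u_{>r}}$ (each $\mu_{t-1}^{u,F_0}$ is normalized) and fixing $u_{1:r}$ splits this law into a factor in $(J_t,\pi)$ times a factor in $(F_0,Y_{u_{\le r}})$.

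Hence $\mathbb E[1/n\mid h,Y_{u_r}]=c(h):=\mathbb E[1/n\mid h]>0$. The per-history objective becomes $c(h)\,\mathbb E_{\mathsf Q}[-\log r(Y\mid h)/L(Y)]$, and your Gibbs argument then applies unchanged.
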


\begin{proof}
The forward path law factorizes as $q(\Omega\mid F_0)=q(F_T\mid F_0)\prod_{t=1}^{T}q(F_{t-1}\mid F_t,F_0)\,\nu(\pi_t\mid J_t)$, and the reverse path law as $p_\theta(F_0,\Omega)=p(F_T)\prod_{t=1}^{T}p_\theta(F_{t-1},\pi_t\mid F_t)$, where $p_\theta(F_{t-1},\pi_t\mid F_t)$ is the summand in Eq.~\ref{eq:app_reverse}. Since $\rho_T=1$, $q(F_T\mid F_0)=p(F_T)$. Jensen's inequality and Eqs.~\ref{eq:app_posterior} and~\ref{eq:app_reverse}, in which the selection factors and $\nu$ cancel, give
\begin{equation}
\begin{aligned}
-\log p_\theta(F_0)
&\le\mathbb E_{q(\Omega\mid F_0)}\Bigl[\sum_{t=1}^{T}\log\frac{q(F_{t-1}\mid F_t,F_0)\,\nu(\pi_t\mid J_t)}{p_\theta(F_{t-1},\pi_t\mid F_t)}\Bigr]\\
&=\sum_{t=1}^{T}\mathbb E_{q}\Bigl[\sum_{u\in J_t}\log\mu_{t-1}^{u,F_0}(Y_u)-\sum_{r=1}^{|J_t|}\log r_\theta^{(r)}\Bigr]\\
&=T\,\mathbb E_{t\sim\operatorname{Unif}\{1,\ldots,T\}}\,\mathbb E_{q}\Bigl[-\sum_{r=1}^{|J_t|}\log r_\theta^{(r)}\Bigr]-\mathcal H(F_0)
=\mathcal L_{\mathrm{VB}}(F_0;\theta)-\mathcal H(F_0),
\end{aligned}
\label{eq:app_elbo_proof}
\end{equation}
where $\mathcal H(F_0)=\sum_{t}\mathbb E_q[-\sum_{u\in J_t}\log\mu_{t-1}^{u,F_0}(Y_u)]\ge0$. Since $-\log r_\theta^{(r)}\ge0$ and $1\le L_r\le\bar L$, every $\xi$ satisfies $\Lambda_\theta(F_0;\xi)\le T\bar L\,n_\xi\,\ell_\theta(F_0;\xi)$, and taking expectations gives the second inequality of Eq.~\ref{eq:app_elbo}. For Eq.~\ref{eq:app_tilt}, the weight $1/n$ depends only on $J_t$, which is independent of $Y_{u_r}$ given $h$. It therefore scales the contribution of $h$ by $c(h)=\mathbb E[1/n\mid h]>0$. The weight $1/L(Y)$ depends on the target itself, and
\begin{equation}
\begin{aligned}
c(h)\,\mathbb E_{\mathsf Q}\Bigl[\frac{-\log r(Y\mid h)}{L(Y)}\Bigr]
&=c(h)\,Z(h)\,\mathbb E_{r^\star}\bigl[-\log r(Y\mid h)\bigr]\\
&=c(h)\,Z(h)\Bigl[\mathbb H(r^\star)+\operatorname{KL}\bigl(r^\star\,\|\,r(\cdot\mid h)\bigr)\Bigr],
\end{aligned}
\label{eq:app_tilt_proof}
\end{equation}
which is minimized at $r=r^\star$, where $\mathbb H$ denotes the Shannon entropy. Finally, $\log(r^\star/\mathsf Q)=-\log L(Y)-\log Z(h)\in[-\log\bar L,\log\bar L]$ because $1\le L(Y)\le\bar L$ and $\bar L^{-1}\le Z(h)\le1$, which bounds both divergences.
\end{proof}

\begin{remark}
Eq.~\ref{eq:app_elbo} explains why Eq.~\ref{eq:variational_diffusion} is a surrogate rather than an evidence lower bound. The unnormalized loss $\Lambda_\theta$ is an unbiased single-sample estimator of the $\theta$-dependent part of the bound. Teacher forcing is exact because the bound is evaluated along paths drawn from $q$, and the factor $T$ arises from sampling the level uniformly. The normalized loss controls the same bound up to the known factor $T\bar L\,n_\xi$. By Eq.~\ref{eq:app_tilt}, the factor $1/n$ equalizes pools with different numbers of reveals without changing the minimizer. The factor $1/L_r$ equalizes leaf-, subtree-, and alpha-level targets at the cost of a tilt of at most $\log\bar L$ nats toward shorter completions. As a result, $-\ell_\theta$ is a per-target, per-token reconstruction score that is comparable across pools of different structure, which is the property used by the preference objective.
\end{remark}

\begin{proposition}[Pool-level preference objective]
\label{prop:app_pref}
Consider the KL-regularized pool-level reward maximization over augmented reverse paths,
\begin{equation}
\max_{\theta}\;\mathbb E_{F_0\sim p_\theta}\bigl[R(F_0)\bigr]-\beta'\,\operatorname{KL}\bigl(p_\theta(F_0,\Omega)\,\big\|\,p_{\mathrm{ref}}(F_0,\Omega)\bigr),
\qquad\beta'>0.
\label{eq:app_kl_objective}
\end{equation}
Assume that its maximizer is attained within the model class, and model preferences as $\Pr(F_{\mathrm w}\succ F_{\mathrm l})=\sigma\bigl(R(F_{\mathrm w})-R(F_{\mathrm l})\bigr)$. Then the negative log-likelihood of accepted pairs is
\begin{equation}
\begin{aligned}
\mathcal L^\star_{\mathrm{pref}}(\theta)&=-\mathbb E_{\mathcal P_{\mathcal Q}}\log\sigma\bigl(\beta'\Delta^\star_\theta\bigr),\\
\Delta^\star_\theta
&=\bigl[\mathcal L^{\mathrm l}_{\mathrm{VB}}(\theta)-\mathcal L^{\mathrm w}_{\mathrm{VB}}(\theta)\bigr]
-\bigl[\mathcal L^{\mathrm l}_{\mathrm{VB}}(\theta_{\mathrm{ref}})-\mathcal L^{\mathrm w}_{\mathrm{VB}}(\theta_{\mathrm{ref}})\bigr],
\end{aligned}
\label{eq:app_exact_pref}
\end{equation}
where $\mathcal L^{a}_{\mathrm{VB}}(\vartheta)=\mathcal L_{\mathrm{VB}}(F_a;\vartheta)$ for $a\in\{\mathrm w,\mathrm l\}$. Let $X$ be the single-sample estimate of $\beta'\Delta^\star_\theta$ obtained from $\Lambda_\vartheta$ under a coupling $\Gamma$ of $(\xi_{\mathrm w},\xi_{\mathrm l})$ that preserves the corruption law of each pool. Then
\begin{equation}
-\log\sigma\bigl(\mathbb E_\Gamma X\bigr)
\le\mathbb E_\Gamma\bigl[-\log\sigma(X)\bigr]
\le-\log\sigma\bigl(\mathbb E_\Gamma X\bigr)+\tfrac18\operatorname{Var}_\Gamma(X),
\label{eq:app_jensen}
\end{equation}
and replacing $\Lambda_\vartheta$ by $T\ell_\vartheta$ with $\beta=T\beta'$ yields Eq.~\ref{eq:pool_preference}.
\end{proposition}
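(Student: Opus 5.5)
The plan follows the standard DPO derivation, lifted to augmented reverse paths, and then replaces the intractable path log-ratio by the variational bound from Proposition~\ref{prop:app_elbo}.

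\textbf{Step 1: optimal policy.} Rewrite Eq.~\ref{eq:app_kl_objective} as $-\beta'\operatorname{KL}(p_\theta\,\|\,p^\star)+\text{const}$, where
\[
p^\star(F_0,\Omega)=p_{\mathrm{ref}}(F_0,\Omega)\exp\bigl(R(F_0)/\beta'\bigr)/Z .
\]
Under the attainment assumption the maximizer is $p_{\theta^\star}=p^\star$. This gives, for every path,
\[
R(F_0)=\beta'\log\frac{p_{\theta^\star}(F_0,\Omega)}{p_{\mathrm{ref}}(F_0,\Omega)}+\beta'\log Z .
\]

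\textbf{Step 2: difference of rewards.} Because $R$ depends only on $F_0$, we may take the expectation of the log-ratio over any path distribution conditional on $F_0$. I would choose the forward law $q(\Omega\mid F_0)$, which is the same for $\theta$ and $\theta_{\mathrm{ref}}$.

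Write each joint as $\log p_\vartheta(F_0,\Omega)=\log q(\Omega\mid F_0)+\log\frac{p_\vartheta(F_0,\Omega)}{q(\Omega\mid F_0)}$. Taking $\mathbb E_{q(\Omega\mid F_0)}$ turns each term into minus the bound of Eq.~\ref{eq:app_elbo_proof}, namely $-\mathcal L_{\mathrm{VB}}(F_0;\vartheta)+\mathcal H(F_0)$. The selection factors and $\nu$ cancel, and $\mathcal H$ also cancels between $\theta$ and $\theta_{\mathrm{ref}}$.

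Differencing the winner and the loser eliminates $\log Z$. We obtain
\[
R(F_{\mathrm w})-R(F_{\mathrm l})=\beta'\Delta^\star_{\theta^\star}.
\]
Substituting into the Bradley--Terry model and parametrizing the reward by $\theta$ yields Eq.~\ref{eq:app_exact_pref}. The $\theta$-independent entropy terms do not enter.

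\textbf{Step 3: the sandwich Eq.~\ref{eq:app_jensen}.} $\Lambda_\vartheta$ is an unbiased estimator of $\mathcal L_{\mathrm{VB}}$ for each pool's marginal corruption law, and $\Gamma$ preserves these marginals. Hence $\mathbb E_\Gamma X=\beta'\Delta^\star_\theta$ by linearity, whatever the dependence between $\xi_{\mathrm w}$ and $\xi_{\mathrm l}$.

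Set $g(x)=-\log\sigma(x)$. It is convex, which gives the lower bound by Jensen. Its second derivative is $g''(x)=\sigma(x)(1-\sigma(x))\le\tfrac14$. A second-order Taylor expansion around $\mu=\mathbb E X$ gives
\[
g(X)\le g(\mu)+g'(\mu)(X-\mu)+\tfrac18(X-\mu)^2,
\]
and taking expectations gives the upper bound.

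\textbf{Step 4: replacement.} Substitute $\Lambda_\vartheta\mapsto T\ell_\vartheta$ and absorb $T$ into $\beta=T\beta'$. This reproduces the bracket in Eq.~\ref{eq:pool_preference}, with the shared pseudorandom stream serving as $\Gamma$.

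\textbf{Main obstacle.} The delicate point is Step 2. Exact DPO would need $\log p_\vartheta(F_0)$, or the full path ratio, which is intractable. The claim must therefore be read as using the path-level reward identity averaged under $q(\Omega\mid F_0)$, which is the Diffusion-DPO convention.

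I would state explicitly that the averaging is valid because the identity holds pointwise in $\Omega$. The remaining step needs more care: the inner term $\log p_\vartheta(F_0,\Omega)/q(\Omega\mid F_0)$ equals $-\Lambda$ plus $\theta$-free terms only once the $\nu$ and selection factors cancel, as in Proposition~\ref{prop:app_elbo}. That cancellation uses the summand form $p_\theta(F_{t-1},\pi_t\mid F_t)$, which treats the order as part of the path. If this identification fails, the fallback is to present $\Delta^\star$ as the natural upper-bound surrogate, following Wallace et al.
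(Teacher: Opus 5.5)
Your proposal is correct and follows the paper's proof essentially step for step. It uses the Gibbs-form maximizer and the endpoint-only tilt holding pointwise on paths, averaged under $q(\Omega\mid F_0)$ with the selection factors, $\nu$, and $\mathcal H(F_0)$ cancelling as in Proposition~\ref{prop:app_elbo}, and then marginal-preserving linearity, convexity of $-\log\sigma$, and the $g''\le\tfrac14$ Taylor bound. Your hedge in the last paragraph is unnecessary, since $\Omega=(F_{1:T},\pi_{1:T})$ includes the orders by definition and (A1) guarantees $p_{\mathrm{ref}}>0$ on the support of $q(\Omega\mid F_0)$, so the averaged identity is exact rather than a Diffusion-DPO-style surrogate.
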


\begin{proof}
Eq.~\ref{eq:app_kl_objective} equals $-\beta'\operatorname{KL}(p_\theta\,\|\,p_{\mathrm{ref}}e^{R/\beta'}/Z)+\beta'\log Z$ with $Z=\mathbb E_{p_{\mathrm{ref}}}[e^{R(F_0)/\beta'}]$, so its maximizer is $p^\star=p_{\mathrm{ref}}e^{R/\beta'}/Z$. The tilt depends only on the endpoint, so $\log(p^\star/p_{\mathrm{ref}})=R(F_0)/\beta'-\log Z$ on every path ending at $F_0$, which by (A1) includes the support of $q(\Omega\mid F_0)$. We parameterize $p^\star$ by $p_\theta$~\citep{rafailov2023direct}, average under $q(\Omega\mid F_0)$, and apply the cancellations in Eq.~\ref{eq:app_elbo_proof}, which also remove $\mathcal H(F_0)$:
\begin{equation}
\begin{aligned}
R(F_0)-\beta'\log Z
&=\beta'\,\mathbb E_{q(\Omega\mid F_0)}\Bigl[\log\frac{p_\theta(F_0,\Omega)}{p_{\mathrm{ref}}(F_0,\Omega)}\Bigr]
=\beta'\,\mathbb E_{q(\Omega\mid F_0)}\Bigl[\sum_{t=1}^{T}\sum_{r=1}^{|J_t|}\log\frac{r_\theta^{(r)}}{r_{\mathrm{ref}}^{(r)}}\Bigr]\\
&=-\beta'\bigl[\mathcal L_{\mathrm{VB}}(F_0;\theta)-\mathcal L_{\mathrm{VB}}(F_0;\theta_{\mathrm{ref}})\bigr].
\end{aligned}
\label{eq:app_reward_identity}
\end{equation}
Substituting Eq.~\ref{eq:app_reward_identity} for $F_{\mathrm w}$ and $F_{\mathrm l}$ into the preference model cancels $\log Z$ and gives Eq.~\ref{eq:app_exact_pref}. For Eq.~\ref{eq:app_jensen}, let $A_a=\beta'[\Lambda_\theta(F_a;\xi_a)-\Lambda_{\mathrm{ref}}(F_a;\xi_a)]$ for $a\in\{\mathrm w,\mathrm l\}$ and $X=A_{\mathrm l}-A_{\mathrm w}$. Because $\Gamma$ preserves the marginal laws, $\mathbb E_\Gamma X=\beta'\Delta^\star_\theta$. The function $g(x)=-\log\sigma(x)$ is convex with $g''(x)=\sigma(x)\sigma(-x)\le1/4$, so Jensen's inequality gives the lower bound, and a second-order expansion around $\mathbb E_\Gamma X$ gives the upper bound. The last claim follows from Eq.~\ref{eq:app_losses}, since $\beta'\,T\ell_\vartheta=\beta\,\ell_\vartheta$.
\end{proof}

\begin{remark}
Since $\operatorname{Var}_\Gamma(X)=\operatorname{Var}(A_{\mathrm l})+\operatorname{Var}(A_{\mathrm w})-2\operatorname{Cov}_\Gamma(A_{\mathrm l},A_{\mathrm w})$, only the covariance depends on the coupling. The matched pseudorandom stream in Section~\ref{sec:reward_guide} therefore tightens Eq.~\ref{eq:app_jensen} whenever it induces positive covariance, and evaluating $\theta$ and $\theta_{\mathrm{ref}}$ under the same $\xi_a$ makes $X\equiv0$ at initialization. The normalization in Eq.~\ref{eq:pool_preference} turns the implicit reward into the per-target, per-token quantity $-\beta(\ell_\theta-\ell_{\mathrm{ref}})$, which prevents pools with more or longer targets from dominating the margin. Proposition~\ref{prop:app_elbo} bounds the resulting deviation from Eq.~\ref{eq:app_exact_pref}.

Although $R$ is computable, the pairwise objective has three advantages over maximizing $\mathbb E_{p_\theta}[R]$ directly. First, the expected-reward gradient requires fresh on-policy samples, each consuming distinct-formula evaluations from the budget $B$ in Eq.~\ref{eq:budget}, whereas Eq.~\ref{eq:pool_preference} reuses scored buffer pools. Second, the pair-conditional likelihood does not involve the distribution from which pairs are drawn, so off-policy elite pairs are admissible. Moreover, $R$ enters only through the acceptance event, which is invariant to positive affine transformations of $R$. Third, with $\mathcal B_{\mathcal Q}\sim P_{\mathcal Q}$, the reconstruction term is a reward-weighted regression toward the Boltzmann distribution over standardized buffer rewards. It reallocates likelihood among retained pools but never contrasts them, and the preference term supplies this contrast. Freezing $\theta_{\mathrm{ref}}$ fixes the target $p^\star\propto p_{\mathrm{ref}}e^{R/\beta'}$. Writing $\Delta$ for the bracket in Eq.~\ref{eq:pool_preference}, the reference enters the gradient only through a per-pair offset that rescales the weight $\sigma(-\beta\Delta)$ without changing the direction $\nabla_\theta\ell^{\mathrm l}_\theta-\nabla_\theta\ell^{\mathrm w}_\theta$. Resetting the reference after each of $k$ optimal rounds would instead yield $p_{\mathrm{ref}}e^{kR/\beta'}$, a progressively colder target that erodes proposal diversity. The proposition does not claim that finite-step training attains $p^\star$. The fixed selection factors make attainability an assumption, and because labels are deterministic given $R$, the preference model is recovered only in its zero-temperature limit. Eq.~\ref{eq:pool_preference} therefore acts as a logistic ranking loss on accepted pairs.
\end{remark}

\subsection{Algorithm Flow Chart}
\label{app:algorithm}
The algorithm flow of \shortname{} is shown in Algorithm~\ref{alg:pool_refinement}.

\begin{algorithm}[t]
\caption{\shortname{} pool refinement and training}
\label{alg:pool_refinement}
\begin{algorithmic}[1]
\Require buffer capacity $C_{\mathcal Q}$, number of initial pools $M\le C_{\mathcal Q}$,
budget $B$ (Eq.~\ref{eq:budget}), proposal batch size $N_{\mathrm c}$, minibatch
size $N$, initialization updates $S_0$, per-round updates $S_{\mathrm u}$,
stall limit $N_{\mathrm{stall}}$
\State Build $\mathcal Q_0$ by Eqs.~\ref{eq:elite_buffer_initialization}
and~\ref{eq:pool_reward}, and add all submitted canonical formulas to $\mathcal A$
\State $\mathcal Q\gets\mathcal Q_0$, \ $\mathcal C\gets\{F:(F,R(F))\in\mathcal Q_0\}$, \ $n_{\mathrm{stall}}\gets0$
\State Initialize $\theta$ and perform $S_0$ updates of Eq.~\ref{eq:variational_diffusion}
on minibatches from $\mathcal Q_0$
\State $\theta_{\mathrm{ref}}\gets\theta$ and freeze $\theta_{\mathrm{ref}}$
\While{$|\mathcal A|<B$ \textbf{and} $n_{\mathrm{stall}}<N_{\mathrm{stall}}$}
    \State $a\gets|\mathcal A|$
    \State Sample $N_{\mathrm c}$ pools from $P_{\mathcal Q}$
    (Eq.~\ref{eq:buffer_sampling}) to form $\mathcal B_{\mathrm c}$
    \For{each $F\in\mathcal B_{\mathrm c}$}
        \State Corrupt $F$ to level $t_0=\max\{1,\operatorname{round}(\alpha T)\}$
        by Eq.~\ref{eq:hier_corruption}
        \State Reconstruct $F'$ from $t_0$ to $0$ with $r_\theta$ conditioned on
        Eq.~\ref{eq:pool_context}
        \State Canonicalize $F'$ and replace repeated members by random formulas
        \State $\mathcal U\gets$ canonical members of $F'$ that are not in $\mathcal A$
        \If{$|\mathcal A|+|\mathcal U|\le B$} \Comment{otherwise discard $F'$}
            \State Submit $\mathcal U$ to the evaluator and set $\mathcal A\gets\mathcal A\cup\mathcal U$
            \If{$F'$ satisfies the eligibility rule of Sec.~\ref{sec:reward_guide}}
                \State Compute $R(F')$ by Eqs.~\ref{eq:pool_reward_components}
                and~\ref{eq:pool_reward}, and set $\mathcal C\gets\mathcal C\cup\{F'\}$
                \State Insert $(F',R(F'))$ into $\mathcal Q$ and keep the top
                $C_{\mathcal Q}$ canonical pools
            \EndIf
        \EndIf
    \EndFor
    \State $n_{\mathrm{stall}}\gets0$ \textbf{if} $|\mathcal A|>a$ \textbf{else} $n_{\mathrm{stall}}+1$
    \For{$s=1,\ldots,S_{\mathrm u}$}
        \State Sample $\mathcal B_{\mathcal Q}$ of $N$ pools from $P_{\mathcal Q}$
        and evaluate Eq.~\ref{eq:variational_diffusion}
        \State Sample accepted pairs from $\mathcal P_{\mathcal Q}$ and evaluate
        Eq.~\ref{eq:pool_preference}, or set $\mathcal L_{\mathrm{pref}}=0$ if none exists
        \State Update $\theta$ by one gradient step on Eq.~\ref{eq:joint_objective}
    \EndFor
\EndWhile
\State \Return $\widehat F$ selected from $\mathcal C$ by $\widehat F=\argmax_{F\in\mathcal C}U_{\mathrm{val}}(F).$
\end{algorithmic}
\end{algorithm}

\subsection{Complexity Analysis}
\label{app:complexity}
Let $h$, $L$, and $H$ denote the hidden width, maximum formula length,
and number of reconstructed regions. With fixed Transformer depth, dense
attention, full pool re-encoding after each insertion, and key--value
caching, reconstructing one region costs
$C_{\mathrm r}=O((KL^2+K^2+L^2)h+(KL+K+L)h^2)$. Thus, forward
corruption and reverse generation cost $O(KL)$ and
$O(HC_{\mathrm r})$, respectively, with the same per-target order during
training. For $n=\sum_d N_d$ observations, pool evaluation costs
$O(K\sum_d N_d\log N_d+nK^2+K^3)$. Assuming linear-time DSL operations,
the budget $B$ bounds first-time formula execution by $O(BnL)$, while
cached signals avoid repeated execution. The elite buffer stores
$O(C_{\mathcal Q}KL)$ tokens.

\section{Supplementary Experiments}
\label{app:experiment_supplement}
This section details the experimental setup
(Appendix~\ref{app:experiment_details}) and reports additional results
(Appendix~\ref{app:additional_results}).

\subsection{Additional Experimental Details}
\label{app:experiment_details}

\subsubsection{Implementation Details}
\label{app:implementation_details}

\paragraph{Model architecture.}
The reverse model comprises a formula encoder, a pool encoder, and an
autoregressive decoder. The formula encoder is a shared two-layer
Transformer that combines token and within-formula positional embeddings
and mean-pools the non-padding states. For a queried position, its
\textsc{mask} token is replaced by \textsc{fill}, and the resulting query
copy is encoded by the same formula encoder. The pool encoder applies one
set-attention layer to the $K$ formula representations and the query
representation without row positional embeddings. Their mean-pooled
representation is combined with the diffusion-level embedding $E_t(t)$
to obtain the pool context in Eq.~\ref{eq:pool_context}. A two-layer
causal Transformer decoder conditions every token prediction on this
context and generates the reconstruction target as an RPN sequence from
\textsc{beg} to \textsc{sep}. All Transformer modules share the hidden dimension, number of
attention heads, and dropout listed in Table~\ref{tab:hyperparameter_settings}.
Grammar masking enforces the operator-arity, stack, and maximum-length
constraints during decoding.

\paragraph{Training.}
We optimize the reverse model with AdamW, initializing weight matrices
with Xavier uniform initialization and biases with zero. Before candidate
generation, the model receives $S_0$ reconstruction updates on the
initialized buffer, after which the reference model is frozen. Each
subsequent round generates $N_{\mathrm c}$ candidate pools and performs
$S_{\mathrm u}$ updates on minibatches of $N$ pools. Search rewards and
generator updates use only the 2016--2020 training period, the
2021--2022 validation period selects the configuration, checkpoint, and
final pool, and the 2023--2025 test period is used only for the final
evaluation. Table~\ref{tab:hyperparameter_settings} lists all values.

\paragraph{Static ridge combination.}
Every pool, including those returned by the baselines, is evaluated
through the same code path. Each member signal is converted on every date
to tie-aware cross-sectional ranks scaled to $[-1,1]$. The combiner solves
Eq.~\ref{eq:ridge} in closed form,
$\widehat w_F=(\widetilde X^{\top}\widetilde X/n+\gamma I)^{-1}\widetilde X^{\top}\widetilde y/n$,
without an intercept and with the penalty normalized by the pool size. The
penalty $\gamma$ is selected on a logarithmic grid by the Spearman
correlation on a chronological $80/20$ split of the training period, with
2016--2019 for fitting, 2020 for selection, and a 20-day gap between them,
after which the combiner is refitted on the full training period. The
weights are then frozen, so every validation and test date uses the same
$\widehat w_F$ and all $K$ members, without refitting or reselection. 
We use the static combiner for all
automated alpha mining methods for four reasons. It evaluates the same jointly fitted pool that the search reward optimizes. It attributes performance to the pool rather
than to an adaptive combiner, which would otherwise perform part of the
prediction for every method. Frozen weights cannot compensate for factor
decay over the three-year test period, which makes the evaluation
stricter. Finally, because all $K$ members contribute to every prediction,
the pool size directly affects the combined predictor, which the analysis
in Section~\ref{sec:pool_size} relies on.

\subsubsection{Evaluation Metrics and Backtesting Protocol}
\label{app:evaluation_protocol}

Let $\widehat{\mathbf y}_d$ and $\mathbf y_d$ denote the predicted
signals and return labels across the eligible assets on date $d$.
We define
$\rho_d=\operatorname{Pearson}(\widehat{\mathbf y}_d,\mathbf y_d)$
and
$\rho_d^{\mathrm r}
=\operatorname{Pearson}(\operatorname{rank}(\widehat{\mathbf y}_d),
\operatorname{rank}(\mathbf y_d))$,
where ties are assigned their average ranks. For a daily sequence
$\{z_d\}_{d=1}^{D}$, let
$\overline z=D^{-1}\sum_{d=1}^{D}z_d$ and $s(z)$ denote its sample
standard deviation.

\begin{itemize}[leftmargin=*,nosep]
\item \textbf{Information coefficient (IC).}
$\mathrm{IC}=|\overline{\rho}|$ measures the average daily
cross-sectional linear correlation between predictions and returns.

\item \textbf{Information coefficient information ratio (ICIR).}
$\mathrm{ICIR}=\overline{\rho}/s(\rho)$ measures the temporal stability
of IC.

\item \textbf{Rank information coefficient (RIC).}
$\mathrm{RIC}=|\overline{\rho^{\mathrm r}}|$ measures the average daily
cross-sectional rank correlation.

\item \textbf{Rank information coefficient information ratio (RICIR).}
$\mathrm{RICIR}=\overline{\rho^{\mathrm r}}/s(\rho^{\mathrm r})$ measures
the temporal stability of RIC.

\end{itemize}

Let $\Delta\Pi_t$ be the realized one-day profit or loss of all
cohorts active on trading day $t$, net of that day's transaction
costs, and let $V_0$ be the fixed initial capital. Define
$r_t=\Delta\Pi_t/V_0$ and
$W_t=1+\sum_{\tau=1}^{t}r_\tau$, with $W_0=1$.
For the benchmark, define $b_t$ from its contemporaneous one-day
profit or loss under the same fixed-capital convention, and let
$W_t^{\mathrm{bench}}=1+\sum_{\tau=1}^{t}b_\tau$.
Write $e_t=r_t-b_t$. There are $D$ test trading days and $Y$
calendar years in the test period. For a daily series $z_t$,
$\bar z$ and $s(z)$ denote its sample mean and standard deviation.

\begin{itemize}[leftmargin=*,nosep]
  \item \textbf{Annualized return (AR).}
  $\mathrm{AR}=(W_D-1)/Y$.

  \item \textbf{Sharpe ratio (SR).}
  $\mathrm{SR}=\sqrt{252}\,\bar r/s(r)$,
  with the risk-free rate set to zero.

  \item \textbf{Maximum drawdown (MDD).}
  $\mathrm{MDD}=
  \max_{1\leq t\leq D}
  \left(1-\frac{W_t}{\max_{0\leq j\leq t}W_j}\right)$.

  \item \textbf{Return information ratio (RETIR).}
  $\mathrm{RETIR}=\bar r/s(r)$.

  \item \textbf{Annualized excess return (AER).}
  $\mathrm{AER}=(W_D-W_D^{\mathrm{bench}})/Y
  =Y^{-1}\sum_{t=1}^{D}e_t$.

  \item \textbf{Information ratio (IR).}
  $\mathrm{IR}=\sqrt{252}\,\bar e/s(e)$.
\end{itemize}

\highlight{Backtesting protocol.}
For CSI300, CSI500, and S\&P500, stocks are ranked by the predicted
signal on each formation date. We form an equal-weighted long cohort
from the top 20\%, allocate one twentieth of the fixed initial capital
$V_0$ to it, and hold it for 20 trading days. A new cohort is formed
each day; thus, the portfolio's realized profit or loss on day $t$
is obtained by marking all cohorts active on that day to market.
A cohort begins earning returns only after its formation trade.

Let $G_t$ be the one-day mark-to-market profit or loss of all active
cohorts, and let $B_t$ and $S_t$ be the notionals bought and sold on
day $t$. We compute
\[
  \Delta\Pi_t
  = G_t-c_{\mathrm{buy}}B_t-c_{\mathrm{sell}}S_t,
  \qquad
  r_t=\frac{\Delta\Pi_t}{V_0},
  \qquad
  W_t=1+\sum_{\tau\leq t}r_\tau,
\]
where $c_{\mathrm{buy}}=0.0005$ and
$c_{\mathrm{sell}}=0.0015$. All return-based portfolio metrics and
cumulative-return curves are computed from this realized daily
series. The 20-trading-day forward return is used as the prediction
target. The corresponding benchmarks are SH000300, SH000905, and
\texttt{\string^GSPC}. Benchmark-relative metrics compare strategy
and benchmark gains on the same trading dates under the same capital
convention. The investable universe is constructed point-in-time:
on each date, it contains only constituents effective on that date.
Features and portfolio decisions use only information available by
the corresponding date. Samples whose 20-day return labels cross a
training, validation, or test boundary are excluded.

\subsubsection{Baseline Details}
\label{app:baselines}
We compare \shortname{} with nine baselines covering fixed factor
libraries, predictive models, and formula-discovery methods:

\begin{itemize}[leftmargin=*,nosep]
\item \textbf{Alpha158}~\citep{yang2020qlib} is a predefined library of
158 technical factors constructed from normalized OHLCV features over
multiple lookback windows. The factor matrix is fitted to the return
target using the common combination model.

\item \textbf{MLP}~\citep{murtagh1991multilayer} is a feed-forward neural network
that maps tabular features to return targets, capturing nonlinear
interactions. It provides a generic nonlinear baseline but offers
limited interpretability.

\item \textbf{XGBoost}~\citep{chen2016xgboost} uses boosted regression
trees with second-order optimization, shrinkage, feature subsampling,
and explicit regularization to model tabular financial features.

\item \textbf{GP}~\citep{chen2021empirical} evolves symbolic expressions through
selection, crossover, and subtree mutation. Candidate formulas are
evaluated by their predictive performance, and the selected formulas
form the final alpha pool.

\item \textbf{AlphaGen}~\citep{yu2023generating} formulates formula
construction as a sequential decision process and trains a masked PPO
policy. Its reward measures the improvement contributed by a generated
formula to a jointly optimized alpha pool.

\item \textbf{AlphaQCM}~\citep{zhu2025alphaqcm} extends reinforcement
learning-based alpha discovery with distributional value estimation.
The estimated return distribution and uncertainty guide exploration
under sparse and non-stationary pool rewards.

\item \textbf{AlphaForge} is a generative--predictive method for formulaic alpha discovery, which couples formula generation with predictive modeling to discover alphas. (Its $K$ selected formulas are evaluated with the shared static ridge combiner.)

\item \textbf{AlphaAgent}~\citep{tang2025alphaagent} uses an
LLM-based multi-agent workflow. Its Idea Agent proposes financial
hypotheses, the Factor Agent translates them into executable formulas,
and the Eval Agent provides empirical feedback for iterative revision.

\item \textbf{AlphaSAGE}~\citep{chen2026alphasage} represents formula
ASTs with a relational graph convolutional network and trains a
GFlowNet to sample diverse high-reward formulas. Its reward combines
predictive performance, structural alignment, and novelty.
\end{itemize}

\noindent\textbf{Shared comparison protocol.}
All baselines use the same point-in-time stock universes, chronological
splits, and 20-day forward return targets described in
Section~\ref{sec:experimental_setup}. MLP and XGBoost receive the same
Alpha158 features. Formula-discovery methods use the same primitive
market fields, formula length limit, and executable-expression
constraints. Each formula-mining method returns $K$ formulas, which are
evaluated with the static ridge combiner in
Appendix~\ref{app:implementation_details} and the backtesting protocol in
Appendix~\ref{app:evaluation_protocol}. All methods follow the same
validation-based selection.

\noindent\textbf{Search budget.}
We compare the non-LLM formula-search methods (GP, AlphaGen,
AlphaQCM, AlphaForge, AlphaSAGE, and \shortname{}) under a common
target budget of $B=40{,}000$ formula evaluations. A formula
evaluation is a call to a method's evaluator that attempts to compute
a candidate formula's signal over the historical panel. Reusing a
cached signal without a new evaluator call does not consume this
budget. The budget controls formula-evaluation work; it does not
equate optimizer updates, generated tokens, or total runtime across
methods. It is enforced at each method's evaluation call as follows:

\begin{itemize}[leftmargin=*,nosep]
\item \textbf{AlphaGen and AlphaQCM} share the same environment and
alpha pool, and training stops once the pool has evaluated
$40{,}000$ formulas.

\item \textbf{AlphaSAGE} evaluates one formula per GFlowNet episode
and runs $40{,}000$ episodes.

\item \textbf{AlphaForge} evaluates candidates in batches of $256$
and stops after the batch that reaches $40{,}000$ evaluations.
It may therefore exceed $B$ by fewer than $256$ evaluations
($<0.64\%$).

\item \textbf{GP} evolves a population of $1{,}000$ formulas for
$40$ generations and evaluates
$1{,}000\times40=40{,}000$ formulas.

\item \textbf{\shortname{}} charges each distinct canonical formula
once, on its first submission, including initialization, duplicate
replacement, and failed submissions. A candidate whose unseen
formulas would exceed the remaining budget is discarded.
\end{itemize}

For Figure~\ref{fig:efficiency_analysis}, we additionally display
$5n_{\mathrm{eval}}$ as an evaluation-equivalent token-action scale,
where $n_{\mathrm{eval}}$ is the number of formula evaluations.
The factor of five corresponds to the median formula length of four
tokens plus the terminating \textsc{sep} action. This is only a
rescaling of the figure's horizontal axis; it is not the actual
number of training steps performed by every method and is not used
as a stopping rule.

For AlphaPADI, each distinct canonical formula is charged on its first submission, including failed submissions, while cached repeats are not charged. The other methods stop according to the evaluator calls or their corresponding native units specified above. Thus, the target budget does not imply equal numbers of distinct formulas, scored pools, validation evaluations, optimizer updates, or equal runtime.

\noindent\textbf{Implementations.}
For AlphaGen\footnote{\url{https://github.com/RL-MLDM/alphagen}},
AlphaQCM\footnote{\url{https://github.com/ZhuZhouFan/AlphaQCM}},
AlphaForge\footnote{\url{https://github.com/DulyHao/AlphaForge}},
AlphaAgent\footnote{\url{https://github.com/RndmVariableQ/AlphaAgent}}, and
AlphaSAGE\footnote{\url{https://github.com/BerkinChen/AlphaSAGE}}, we
follow the default settings of their open-source projects, apart from
the shared protocol and search budget above. For Alpha158, MLP, XGBoost,
and GP, we follow the settings in the AlphaForge repository.

\subsubsection{Hyperparameter Settings}
\label{app:hyper_settings}
Baselines use the settings in Appendix~\ref{app:baselines}, with only the
pool size $K\in\{20,50\}$ selected on the validation split. For
\shortname{}, configurations are selected by the mean validation utility
across five runs and fixed before test evaluation. Shared
hyperparameters are searched over identical ranges wherever applicable,
and parameters specific to a model are tuned within that model.
All experiments are repeated five
times, and we report the mean performance across the five runs.
Statistical significance is assessed against the strongest baseline,
with $p<0.05$ considered statistically significant. 
Table~\ref{tab:hyperparameter_settings} summarizes the settings of \shortname{}.

\begin{table}[!htbp]
\centering
\caption{Hyperparameters of \shortname{}. Sets are search spaces selected on the validation split.}
\label{tab:hyperparameter_settings}
\small
\renewcommand{\arraystretch}{1.08}
\begin{tabular}{@{}llc@{}}
\toprule
\textbf{Hyperparameter} & \textbf{Symbol} & \textbf{Value} \\
\midrule
\multicolumn{3}{@{}l}{\textit{Search}} \\
Formula budget & $B$ & $40{,}000$ \\
Maximum formula length & $L_{\max}$ & $20$ \\
Pool size & $K$ & $\{20,50\}$ \\
Initial pools, buffer capacity & $M$, $C_{\mathcal Q}$ & $64$, $128$ \\
Proposal batch size & $N_{\mathrm c}$ & $16$ \\
Buffer sampling temperature & $T_{\mathrm r}$ & $0.5$ \\
\multicolumn{3}{@{}l}{\textit{Diffusion and objectives}} \\
Diffusion levels, edit strength, start level & $T$, $\alpha$, $t_0$ & $32$, $0.5$, $16$ \\
Diversity weight & $w_{\mathrm{inner}}$ & $\{1,2,5,10\}$ \\
Reconstruction weight & $\lambda_{\mathrm{VB}}$ & $\{0.1,0.3,0.5,0.7\}$ \\
Preference strength, margin & $\beta$, $m_{\mathrm{dpo}}$ & $1.0$, $0.5$ \\
\multicolumn{3}{@{}l}{\textit{Model and optimization}} \\
Hidden dimension, attention heads, dropout & -- & $128$, $4$, $0.1$ \\
Formula, set, and decoder layers & -- & $2$, $1$, $2$ \\
Optimizer, weight decay, gradient clipping & -- & AdamW, $10^{-4}$, $1.0$ \\
Learning rate & -- & $\{2{\times}10^{-5},3{\times}10^{-4},10^{-3}\}$ \\
Minibatch size, initial updates, updates per round & $N$, $S_0$, $S_{\mathrm u}$ & $16$, $16$, $8$ \\
\bottomrule
\end{tabular}
\end{table}

\begin{figure}[!t]
    \centering
    \includegraphics[width=\textwidth]{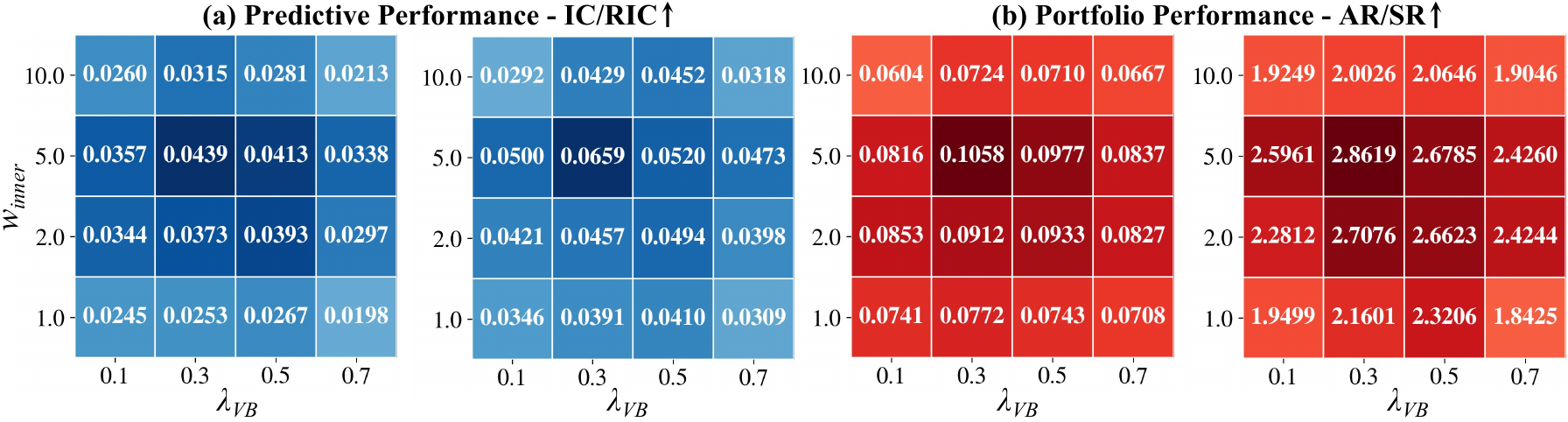}
    \caption{Joint sensitivity to $w_{\mathrm{inner}}$ (rows) and $\lambda_{\mathrm{VB}}$ (columns) on CSI300. Panel (a) shows IC and RIC, and panel (b) shows AR and SR. Annotations give metric values, and colors are normalized separately within each metric. Higher values are better for all four metrics.}
    \label{fig:hyperparameter_analysis}
\end{figure}

\begin{figure}[!t]
    \centering
    \includegraphics[width=\textwidth]{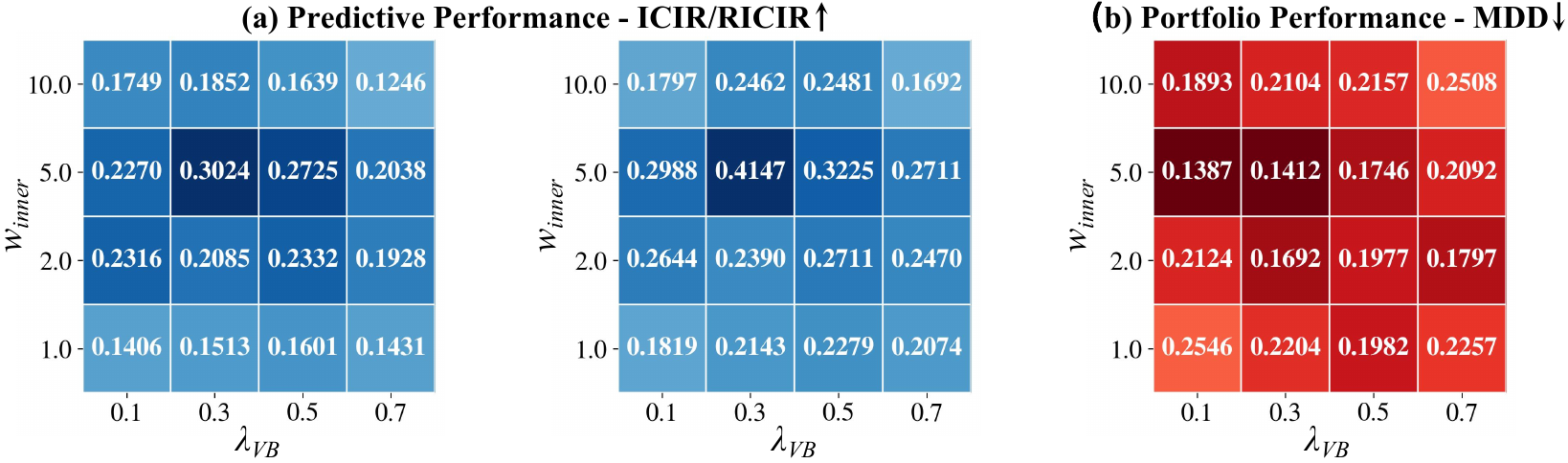}
    \caption{Joint sensitivity to $w_{\mathrm{inner}}$ (rows) and $\lambda_{\mathrm{VB}}$ (columns) on CSI300. Panel (a) shows ICIR and RICIR, and panel (b) shows MDD. Annotations give metric values, and colors are normalized separately within each metric. Higher ICIR and RICIR and lower MDD are better.}
    \label{fig:hyperparameter_analysis_app}
\end{figure}

\subsection{Additional Experimental Results}
\label{app:additional_results}

\begin{figure}[!t]
    \centering
    \includegraphics[width=0.9\textwidth]{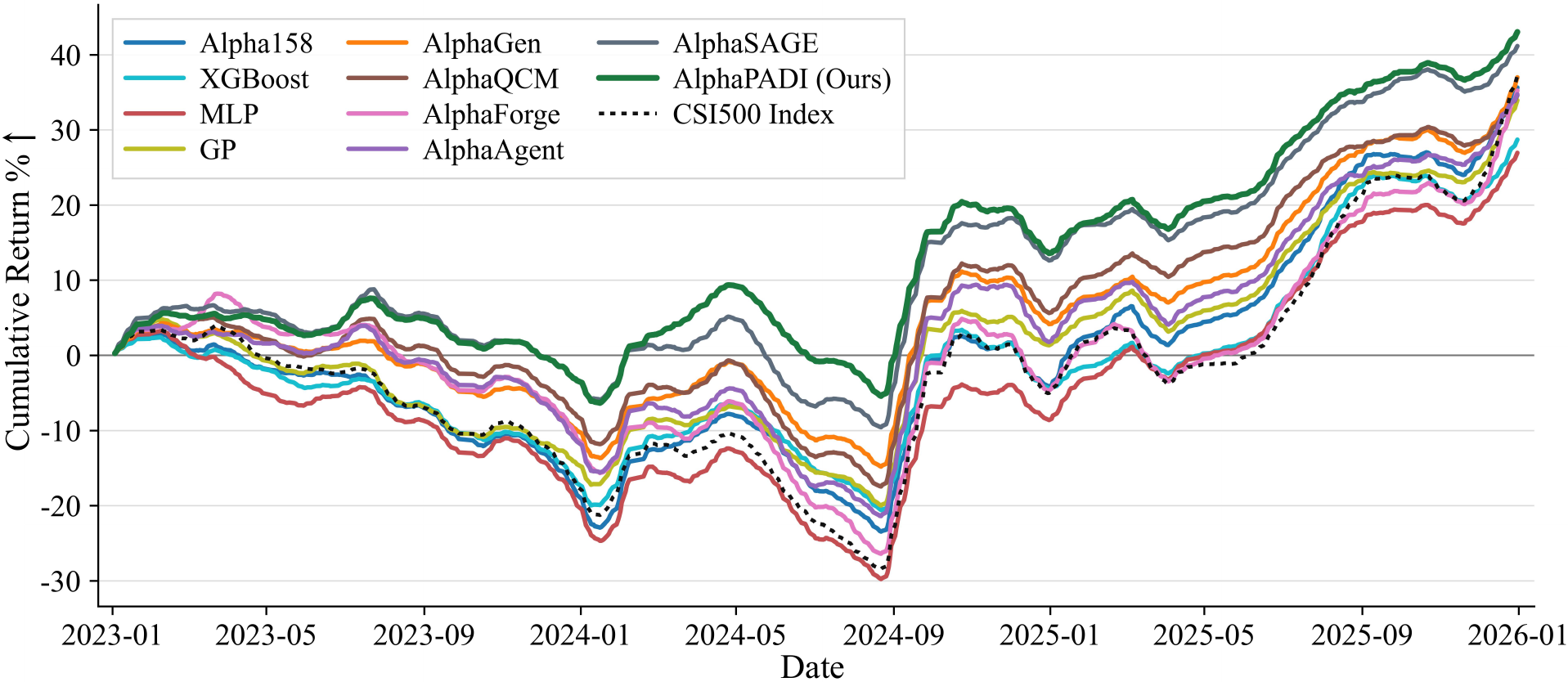}
    \caption{Cumulative returns on CSI500 (2023--2025): \shortname{}, baseline methods, and the CSI500 index.}
    \label{fig:cr_plot_csi500}
\end{figure}

\begin{figure}[!t]
    \centering
    \includegraphics[width=0.9\textwidth]{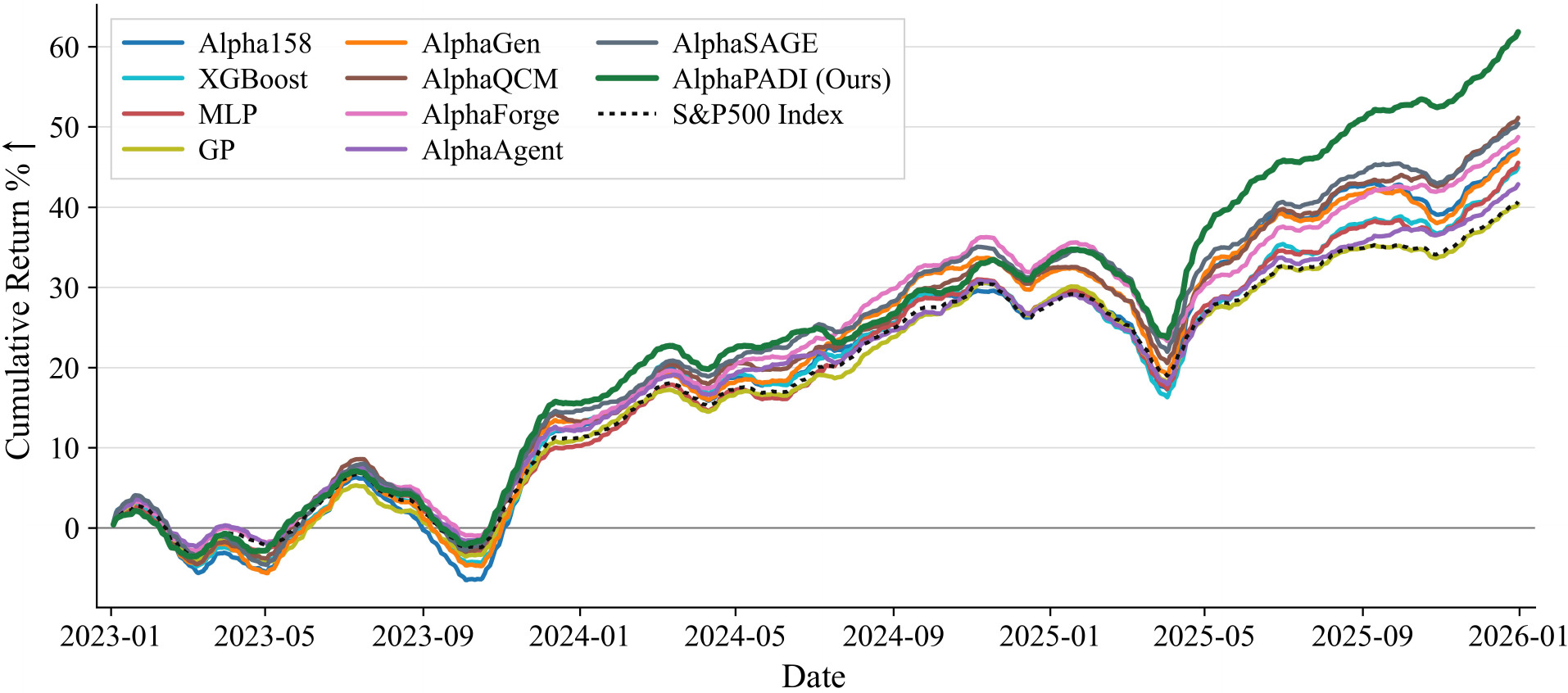}
    \caption{Cumulative returns on S\&P500 (2023--2025): \shortname{}, baseline methods, and the S\&P500 index.}
    \label{fig:cr_plot_sp500}
\end{figure}

\subsubsection{Hyperparameter Analyses}
\label{app:hyperparameter_analyses}
We vary only $w_{\mathrm{inner}}$ and $\lambda_{\mathrm{VB}}$, keeping all other experimental settings and the checkpoint-selection protocol the same as in the main experiments. Figures~\ref{fig:hyperparameter_analysis} and~\ref{fig:hyperparameter_analysis_app} show joint sensitivity to the inner diversity reward weight $w_{\mathrm{inner}}$ and reconstruction-loss weight $\lambda_{\mathrm{VB}}$ on CSI300. Among the displayed configurations, $(w_{\mathrm{inner}},\lambda_{\mathrm{VB}})=(5.0,0.3)$ has the highest IC, RIC, AR, and SR. Moderate diversity weights generally perform better than the smallest and largest tested values, while increasing the reconstruction weight to $0.7$ reduces performance relative to the moderate range. These patterns describe sensitivity within the tested grid.

Figure~\ref{fig:hyperparameter_analysis} complements the IC, RIC, AR, and SR grid in Figure~\ref{fig:hyperparameter_analysis_app} with ICIR, RICIR, and MDD on CSI300. The setting $(w_{\mathrm{inner}},\lambda_{\mathrm{VB}})=(5.0,0.3)$ attains the highest ICIR ($0.3024$) and RICIR ($0.4147$) in the displayed grid. Holding $w_{\mathrm{inner}}=5.0$, increasing $\lambda_{\mathrm{VB}}$ to $0.7$ lowers these metrics to $0.2038$ and $0.2711$, respectively, and increases MDD from $0.1412$ to $0.2092$. The minimum MDD, however, is $0.1387$ at $(5.0,0.1)$, compared with $0.1412$ at $(5.0,0.3)$. Thus, the configuration with the strongest predictive information ratios does not also minimize drawdown. The grid therefore reveals a trade-off between predictive stability and drawdown within the evaluated configurations.

\subsubsection{Backtesting Results}
\label{app:backtesting}
Figures~\ref{fig:cr_plot_csi500} and~\ref{fig:cr_plot_sp500} extend the
CSI300 backtest in the main text to CSI500 and S\&P500. On CSI500,
\shortname{} has a shallower cumulative-return trough than the baselines
during the decline preceding September 2024, participates in the
subsequent recovery, and ends above the competing strategies, with
AlphaSAGE providing the closest final comparison. On S\&P500, the methods
track one another more closely through much of 2023--2024, and the
separation becomes pronounced after the decline in early 2025, when
\shortname{} recovers and finishes above the baselines and the index.
Together, the two plots show when the differences in
Table~\ref{tab:main_results} emerge.

\newcommand{\AlphaDashedRule}{%
  \noalign{%
    \vskip 2pt
    \hbox to \textwidth{%
      \leaders\hbox{\rule{3pt}{0.4pt}\hskip 2pt}\hfill
    }%
    \vskip 2pt
  }%
}

\begin{table}[t]
\centering
\caption{
Comparison of Alpha158 initialization, Alpha360 initialization, and training from scratch on CSI300, including the original factor-library baselines.
Higher values are better except for MDD.
Best results are \textbf{bolded}, and second-best results are
\underline{underlined}.
}
\label{tab:initialization}

\footnotesize
\rmfamily
\setlength{\tabcolsep}{3.0pt}
\renewcommand{\arraystretch}{1.15}

\begin{tabularx}{\textwidth}{@{}
  >{\centering\arraybackslash}m{1.45cm}
  >{\centering\arraybackslash}m{2.05cm}
  *{7}{>{\centering\arraybackslash}X}
@{}}
\toprule
\multirow[c]{2}{1.45cm}{\centering\textbf{Method}}
&
\multirow[c]{2}{2.05cm}{\centering\textbf{Initialization}}
&
\multicolumn{4}{c}{\textbf{Predictive Performance}}
&
\multicolumn{3}{c}{\textbf{Portfolio Performance}}
\\
\cmidrule(lr){3-6}
\cmidrule(lr){7-9}
&
&
\textbf{IC$\uparrow$}
&
\textbf{ICIR$\uparrow$}
&
\textbf{RIC$\uparrow$}
&
\textbf{RICIR$\uparrow$}
&
\textbf{AR$\uparrow$}
&
\textbf{SR$\uparrow$}
&
\textbf{MDD$\downarrow$}
\\
\midrule

Alpha158
& --
& 0.0266 & 0.1631 & 0.0250 & 0.1389
& 0.0751 & 2.1532 & 0.2274
\\

AlphaPADI
& Alpha158
& \textbf{0.0462} & 0.2497 & \underline{0.0632} & 0.3144
& 0.0952 & \textbf{3.0287} & \underline{0.1582}
\\
\AlphaDashedRule

Alpha360
& --
& 0.0067 & 0.0326 & 0.0131 & 0.0611
& 0.0538 & 0.9706 & 0.4220
\\

AlphaPADI
& Alpha360
& 0.0401 & \underline{0.2539} & 0.0541 & \underline{0.3297}
& \underline{0.0967} & 2.4779 & 0.1687
\\
\AlphaDashedRule

AlphaPADI
& \mbox{From Scratch}
& \underline{0.0439} & \textbf{0.3024}
& \textbf{0.0659} & \textbf{0.4147}
& \textbf{0.1058} & \underline{2.8619} & \textbf{0.1412}
\\
\bottomrule
\end{tabularx}
\end{table}

\begin{figure}[t]
    \centering
    \includegraphics[width=\textwidth]{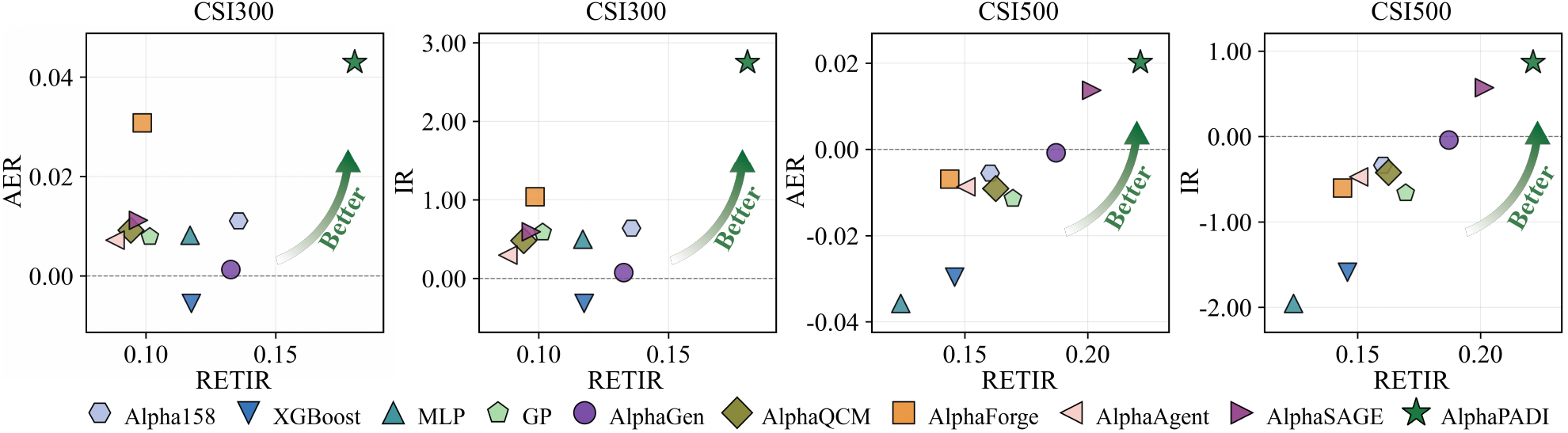}
    \caption{Additional portfolio metrics on CSI300 and CSI500: return information ratio (RETIR), annualized excess return (AER), and information ratio (IR).}
    \label{fig:other_metrics_main}
\end{figure}

\begin{figure}[!t]
    \centering
    \includegraphics[width=\textwidth]{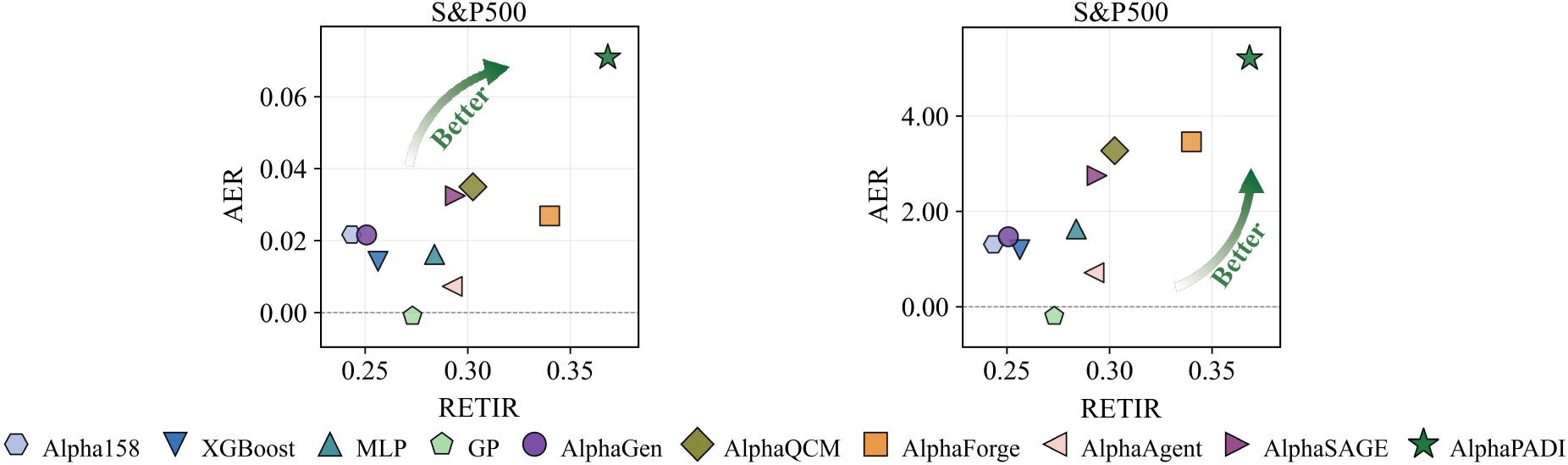}
    \caption{Additional portfolio comparisons on S\&P500. The left panel plots AER against RETIR.}
    \label{fig:other_metrics_app}
\end{figure}

\begin{figure}[!htbp]
    \centering
    \includegraphics[width=\textwidth]{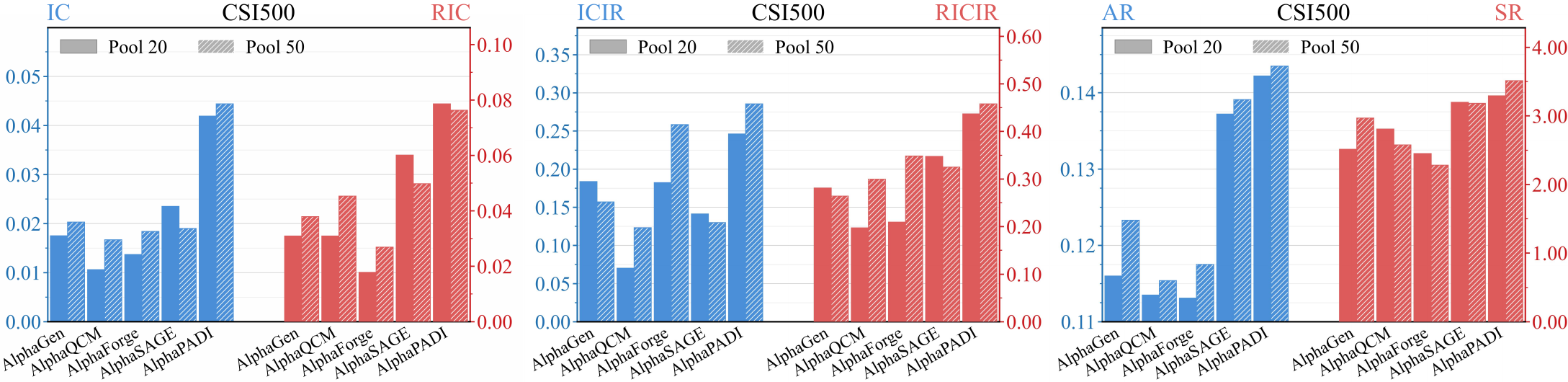}
    \caption{Performance of five alpha-mining methods with pool sizes $K=20$ (solid) and $K=50$ (hatched) on CSI500. Blue and red bars use the corresponding left and right metric axes.}
    \label{fig:pool_size_csi500}
\end{figure}

\begin{figure}[!htbp]
    \centering
    \includegraphics[width=\textwidth]{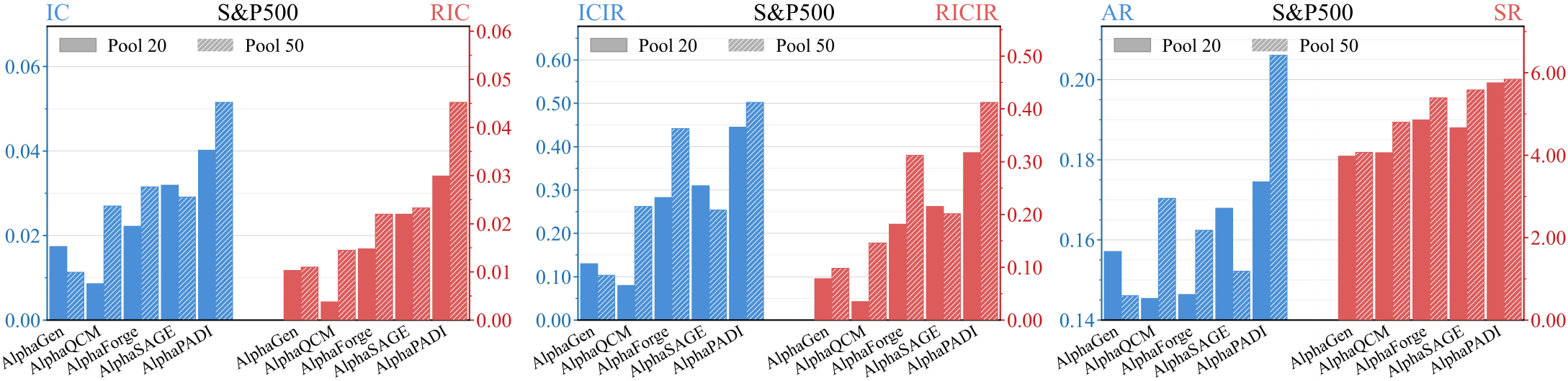}
    \caption{Performance of five alpha-mining methods with pool sizes $K=20$ (solid) and $K=50$ (hatched) on S\&P500. Blue and red bars use the corresponding left and right metric axes.}
    \label{fig:pool_size_sp500}
\end{figure}

\begin{figure}[!t]
    \centering
    \includegraphics[width=0.92\linewidth]{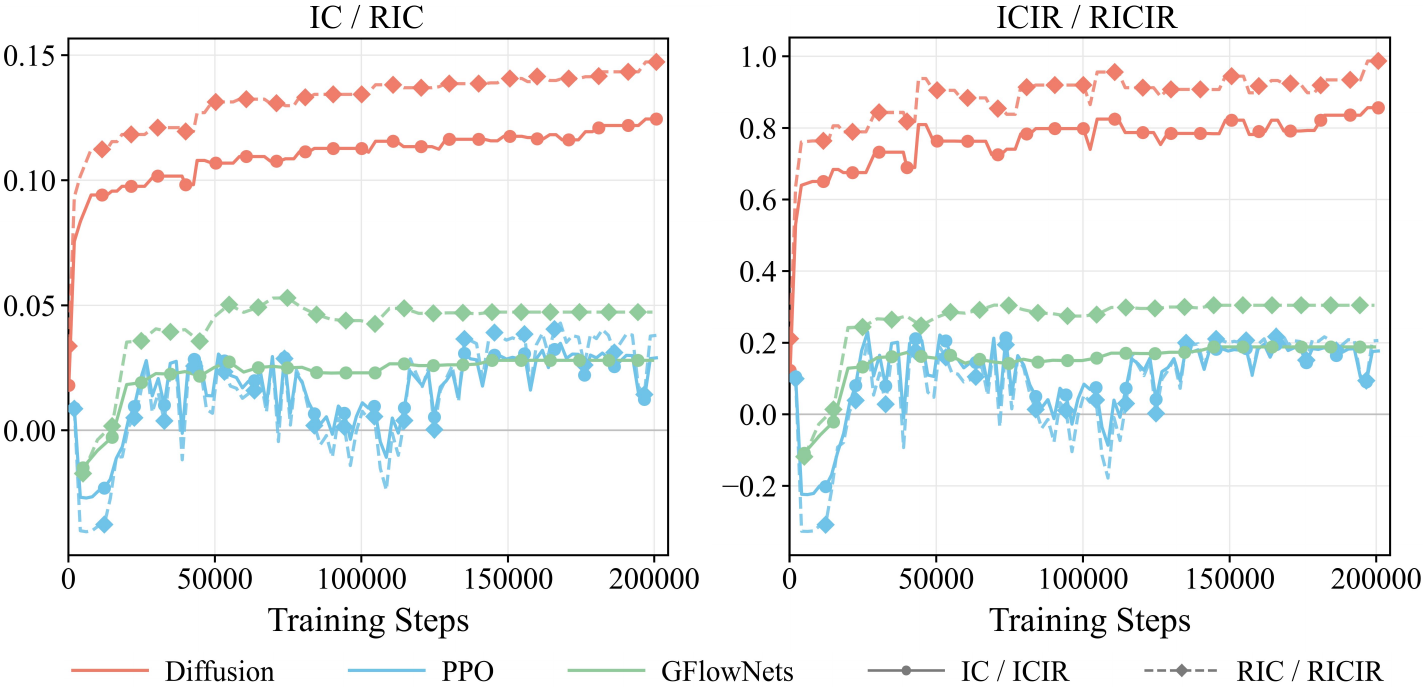}
    \caption{Search dynamics on CSI300: training-split IC/RIC and ICIR/RICIR
of the pool each method would return during search. The horizontal
coordinate labeled Training Steps in the plot is defined as
$5n_{\mathrm{eval}}$, where $n_{\mathrm{eval}}$ is the number of formula
evaluations. This fixed rescaling is used only for display and does not
represent optimizer updates, generated actions, or elapsed training time.}
    \label{fig:efficiency_analysis}
\end{figure}

\subsubsection{Cold-start Analysis}
\label{app:cold_start}
To evaluate whether \shortname{} depends on a predefined factor library, we compare initialization with Alpha158 and Alpha360 against training from scratch
on CSI300. The from-scratch variant ranks first on five of the seven reported metrics and second on the remaining two.
Specifically, it obtains the best ICIR (0.3024), RIC (0.0659), RICIR (0.4147), AR (0.1058), and MDD (0.1412), while achieving the second-best IC (0.0439) and
SR (2.8619). Initialization remains beneficial for particular objectives: Alpha158 initialization produces the highest IC of 0.0462 and SR of 3.0287.
Refinement also improves the reported metrics relative to the two original factor sets. With Alpha360 initialization, for example, it increases IC from 0.0067 to 0.0401
and AR from 0.0538 to 0.0967, while reducing MDD from 0.4220 to 0.1687. For the two libraries tested, refinement improves performance over the corresponding library baselines. The from-scratch results further show that a predefined factor library is not required to obtain competitive performance on CSI300.

\subsubsection{Performance on Other Metrics}
\label{app:other_metrics}

Figure~\ref{fig:other_metrics_main} presents a complementary evaluation using portfolio-level metrics: return information ratio (RETIR), annualized excess return (AER), and information ratio (IR) on the CSI300 and CSI500 stock universes. These indicators complement the main results with additional measures of portfolio returns and benchmark-relative performance. \shortname{} has the highest plotted values on both axes in each comparison. On CSI300, it achieves a RETIR of 0.1803, an AER of 0.0430, and an IR of 2.7520, outperforming AlphaForge by approximately 0.01 in AER and 1.7 in IR. On CSI500, \shortname{} obtains a RETIR of 0.2213, an AER of 0.0202, and an IR of 0.8696. Relative to AlphaSAGE, the strongest competing method on this benchmark, these results correspond to improvements of approximately 10.5\% in RETIR, 46.8\% in AER, and 51.5\% in IR. Notably, while several baselines produce near-zero or negative excess performance on CSI500, \shortname{} remains positive on these reported metrics. These results provide additional evidence of positive benchmark-relative performance on the two evaluated stock universes.

Figure~\ref{fig:other_metrics_app} extends the additional portfolio evaluation to S\&P500. In the left panel, \shortname{} occupies the upper-right position, with both a higher RETIR and a higher AER than the displayed baselines. This complements the Chinese-universe results by showing that the return-based and excess-return advantages also appear in the U.S. universe. AlphaForge has a higher RETIR but a lower AER than AlphaQCM, illustrating that rankings can differ across metrics.

\subsubsection{Search Dynamics Under a Shared Formula-Evaluation Budget}
\label{app:efficiency}

With the budget fixed at $40{,}000$ formula evaluations, the search-dynamics
plot tracks the training-split quality of the pool each method would return
at a given evaluation count. Its horizontal coordinate is $5n_{\mathrm{eval}}$
as defined in the caption; comparisons along this axis concern formula
evaluations rather than actual training steps or wall-clock efficiency.
Because the curves are measured on the training split, they describe search
progress rather than generalization, which is evaluated in Table~1.

AlphaPADI reaches an IC of approximately $0.095$ after about $2{,}000$
formula evaluations (displayed at $10^4$ on the rescaled horizontal axis).
At the end of the budget, its IC and RIC reach $0.1245$ and $0.1473$,
respectively, whereas the PPO and GFlowNet curves end at an IC of
approximately $0.03$ or lower.

\subsubsection{Effect of Pool Size}
\label{app:pool_size}

Figures~\ref{fig:pool_size_csi500} and~\ref{fig:pool_size_sp500} extend the pool-size comparison to CSI500 and S\&P500. They compare pools of $K=20$ and $50$ for AlphaGen, AlphaQCM, AlphaForge, AlphaSAGE, and \shortname{}. Increasing pool size does not uniformly improve every method or metric. On CSI500, for example, \shortname{} improves IC and ICIR with the larger pool, while its RIC decreases slightly. On S\&P500, its larger pool improves all six displayed metrics, including IC, RIC, AR, and SR. The differences across methods show that additional pool capacity alone does not guarantee better performance.

\end{document}